%% file: main.tex
\documentclass[11pt]{article}
\usepackage[hmargin=1in,vmargin=1in]{geometry}
\usepackage{hchang}

\input{preamble}

\begin{document}

\title{
Max Cut with Small-Dimensional SDP Solutions\footnote{
GenAI tools were used in this paper. Our key technical lemma, Theorem~\ref{thm:geom}, was first proved by Google's internal Gemini model with a weaker bound $\eps n^2$ for unspecified $\eps(d) > 0$, which was reported in~\cite{woodruff2026accelerating}. 
Later, we used our personal subscriptions of ChatGPT 5.2 (``Extended Pro'') and Gemini 3.0 (``Pro DeepThink'') to obtain the optimal bound of $\eps = 2^{-\Theta(d)}$. Its proofs are edited by the authors and appear in Section~\ref{sec:geometry}.
}
}

\author{ 
{Hsien-Chih Chang} \\ Dartmouth University\\ \texttt{hsien-chih.chang@dartmouth.edu}
\and
{Suprovat Ghoshal} \\ Indiana University\\ \texttt{supghos@iu.edu}
\and
{Euiwoong Lee} \\ University of Michigan \\
\texttt{euiwoong@umich.edu}
}

\maketitle

\begin{abstract}
We study the Max-Cut semidefinite programming (SDP) relaxation in the regime where a
near-optimal solution admits a low-dimensional realization.
While the Goemans--Williamson hyperplane rounding achieves the worst-case optimal
approximation ratio \(\alpha_{\mathrm{GW}}\approx 0.87856\), it is natural to ask whether
one can beat \(\alpha_{\mathrm{GW}}\) when the SDP solution lives in \(\mathbb{R}^d\) for a
small dimension \(d\).
We answer this in the affirmative for every fixed \(d\):
there is a polynomial-time rounding algorithm that, given a \(d\)-dimensional feasible
solution to the standard Max-Cut SDP strengthened with triangle inequalities,
produces a cut of expected value at least
\((\alpha_{\mathrm{GW}}+2^{-O(d)})\) times the SDP value.
Our improvement is driven by a new geometric
anti-concentration lemma for signs of low-dimensional Gaussian projections.
\end{abstract}



\thispagestyle{empty}
\newpage

\section{Introduction}
\input{intro.tex}

\section{Key Geometric Theorem}
\label{sec:geometry}
\input{geometry.tex}

\section{Improved Algorithm}
\input{algorithms.tex}

\bibliographystyle{alpha}
\bibliography{mybib.bib}


\end{document}

%% file: preamble.tex
\usepackage{fullpage}

\usepackage{graphicx}
\graphicspath{{figs/}}
\usepackage{amsmath,amssymb,amsfonts}
\usepackage{xspace}
\usepackage{thm-restate}
\usepackage{mathtools}

\usepackage{algorithm}
\usepackage{algpseudocode}
\algtext*{EndWhile}
\algtext*{EndIf}
\algtext*{EndFor}
\algrenewcommand\algorithmicrequire{\textbf{Input:}}
\algrenewcommand\algorithmicensure{\textbf{Output:}}

\usepackage{enumitem}
\setlist{topsep=3pt, itemsep=0pt}

\newtheorem{theorem}{Theorem}[section]
\newtheorem{lemma}[theorem]{Lemma}
\newtheorem{claim}[theorem]{Claim}

\newtheorem{proposition}[theorem]{Proposition}

\newtheorem{corollary}[theorem]{Corollary}

\newcommand{\eps}{\varepsilon}

\allowdisplaybreaks

%% file: intro.tex

Max-Cut is a canonical combinatorial optimization problem of partitioning the vertex set of a
weighted undirected graph \(G=(V,E,w)\) into two sides so as to maximize the total weight of
edges crossing the cut. One of Karp's 21 NP-complete problems~\cite{karp2009reducibility}, Max-Cut has served as a testing ground for
new ideas in approximation algorithms and hardness of approximation.


From the algorithms side, the Goemans-Williamson (GW) algorithm~\cite{GW95} is one of the most well-known SDP-based approximation algorithms. It solves the basic semidefinite relaxation
\begin{equation}
  \max \ \ \frac12 \sum_{(i,j)\in E} w_{ij}\bigl(1-\langle v_i,v_j\rangle\bigr)
  \qquad \text{s.t. } \|v_i\|_2=1 \text{ for all } i\in V,
\label{eq:sdp-basic-intro}
\end{equation}
and rounds a feasible solution by sampling a random hyperplane through the origin and cutting
according to the induced signs:
sample \(g\sim\mathcal{N}(0,I)\) and output \(x_i=\mathrm{sgn}(\langle g,v_i\rangle)\).
For an edge \((i,j)\), the cut probability is \(\Pr[x_i\neq x_j]=\arccos(\langle v_i,v_j\rangle)/\pi\),
and the worst-case ratio between this probability and the SDP contribution \((1-\langle v_i,v_j\rangle)/2\)
is the famous Goemans-Williamson constant
\[
  \alpha_{\mathrm{GW}} \ :=\ \min_{\rho\in[-1,1]}
  \frac{\arccos(\rho)/\pi}{(1-\rho)/2}\ \approx\ 0.87856,
\]
which shows that this algorithms is an $\alpha_{GW}$-approximation.
Moreover, assuming the Unique Games Conjecture~\cite{Khot02}, no polynomial-time algorithm can achieve a
\((\alpha_{\mathrm{GW}}+\varepsilon)\)-approximation for any constant \(\varepsilon>0\)
\cite{KKMO07}. Thus, improving upon \(\alpha_{\mathrm{GW}}\) in the worst case is believed impossible.

Under what conditions could we improve the $\alpha_{GW}$-approximation by a constant? There have been active studies on this question, mainly relying on structural assumptions on the graph $G$, including when $G$ is {\em dense} (i.e., $|E| = \Omega(n)$)~\cite{arora1995polynomial} or {\em low-degree} (i.e., the maximum degree is $O(1)$)~\cite{FKL02, HK23}. 

\paragraph{Low-dimensional SDP solutions.}
In this paper, we study the
\emph{low-dimensionality} assumption: the optimal (or near-optimal) SDP solution happens to admit a realization in
\(\mathbb{R}^d\) for small \(d\).
It departs from previous results in that we pose structural assumptions on the SDP solution rather than the graph. 

Avidor and Zwick~\cite{AZ05} initiated the systematic study of this topic.
They showed that in two dimensions, one can achieve the ratio
\(\frac{32}{25+5\sqrt5}\approx 0.884458\) (originally obtained by Goemans in unpublished work),
and in three dimensions they designed the ``yin--yang'' rounding procedure with ratio about \(0.8818\).
This resolves in the negative a question of Feige and Schechtman~\cite{FS01} asking whether one can
have integrality ratios arbitrarily close to \(\alpha_{\mathrm{GW}}\) while admitting a three-dimensional
optimal embedding. 

Avidor and Zwick conjectured that for every fixed \(d\) there is a
rounding algorithm beating \(\alpha_{\mathrm{GW}}\) on instances whose optimal SDP solution lies in
\(\mathbb{R}^d\).
The conjecture is stated with respect to the basic SDP~\eqref{eq:sdp-basic-intro}, but they also asked whether the same conclusion can be made with respect to the SDP with the following {\em triangle inequality constraints}.
\[
  \|a_i v_i-a_j v_j\|_2^2 + \|a_j v_j-a_k v_k\|_2^2 \ \ge\ \|a_i v_i-a_k v_k\|_2^2
  \quad
  \mbox{ for every } i,j,k\in V \mbox{ and }a_i,a_j,a_k\in\{\pm1\},
\]

Prior to this work, no such improvement was known beyond \(d=2,3\) with or without the triangle inequalities. 

The lack of progress on this conjecture is somewhat surprising, given that the positive answer has been known for the related {\em Grothendieck} problems. 
In the (bipartite) Grothendieck problem, given $A \in \mathbb{R}^{m \times n}$, the goal is to compute $x \in \{ \pm 1 \}^m$ and $y \in \{ \pm 1 \}^n$ to maximize $x^T Ay$. In the PSD Grothendieck problem, given positive semidefinite $A \in \mathbb{R}^{n \times n}$, the goal is to compute $x \in \{ \pm 1 \}^n$ to maximize $x^T Ax$. The optimal approximation ratios for these problems, assuming the UGC, are {\em the Grothendieck constant} $K_G$ and $\pi/2$ respectively~\cite{raghavendra2009towards, briet2015tight}, but for both problems, bounded-dimension SDP solutions can be rounded with a strictly better approximation ratio~\cite{briet2010positive, briet2014grothendieck}.

\paragraph{Our result.}
We give a positive answer to the question of Avidor and Zwick with the triangle inequalities, albeit with an exponentially small improvement.

\begin{theorem}[Main Theorem]\label{thm:main}
There exists a randomized polynomial-time algorithm with the following guarantee.
Given a feasible \(d\)-dimensional solution \(\{v_i\in\mathbb{R}^d : \|v_i\|_2=1\}_{i\in V}\) to the
Max-Cut SDP strengthened with triangle inequalities, the algorithm outputs a cut
\(x\in\{\pm1\}^V\) whose expected value is at least
\[
  \bigl(\alpha_{\mathrm{GW}} + 2^{-O(d)}\bigr)\cdot \mathrm{SDP},
\]
where \(\mathrm{SDP}\) is the objective value of the input solution.
\end{theorem}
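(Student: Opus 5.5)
The plan is to run Goemans--Williamson hyperplane rounding and then apply a local-improvement step whose expected gain is at least $2^{-O(d)}\cdot\mathrm{SDP}$ exactly on the instances where hyperplane rounding is tight. The tight edges are the first thing to pin down. The ratio $\frac{\arccos(\rho)/\pi}{(1-\rho)/2}$ equals $\alpha_{\mathrm{GW}}$ only at $\rho^*\approx -0.689$, and it exceeds $\alpha_{\mathrm{GW}}+c\,\delta^2$ whenever $|\rho-\rho^*|\ge\delta$. So I split the SDP weight into edges with $|\langle v_i,v_j\rangle-\rho^*|\ge \delta$ and near-critical edges. If the non-critical edges carry at least a $2^{-O(d)}$ fraction of the SDP weight, plain GW already gives the advantage, with $\delta$ a suitable constant or $2^{-O(d)}$ as needed. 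Otherwise almost all SDP weight sits on edges making angle $\approx\theta^*=\arccos\rho^*\approx 133.6^\circ$.

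In the near-critical case I exploit low dimension. I view the vectors as points on $S^{d-1}$ and the graph as essentially supported on pairs at angle $\approx\theta^*$. The triangle inequalities, in their signed form, forbid certain configurations, for instance three vectors pairwise at angle $\theta^*>120^\circ$ up to signs. This is where the key geometric theorem (Theorem~\ref{thm:geom}) comes in. I expect it to say roughly that for any $n$ unit vectors in $\mathbb{R}^d$ and the random signs $x_i=\mathrm{sgn}\langle g,v_i\rangle$, there is a constant-probability event of $g$ under which, conditioned on the cut, at least $2^{-\Theta(d)}n^2$ pairs (or a $2^{-\Theta(d)}$ fraction of the relevant weight) are ``improvable''. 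Concretely, I would aim to show that with probability $2^{-O(d)}$ a vertex $i$ has strictly more of its near-critical edge weight uncut than cut, so flipping it increases the cut. The intuition is that in $\mathbb{R}^d$ a cap of radius $O(1)$ around $g$'s boundary captures a $2^{-O(d)}$ fraction of neighbors. Vertices with $\langle g,v_i\rangle$ near $0$ see their $\theta^*$-neighbors split unevenly, because the neighbors at angle $\theta^*$ form a $(d-2)$-sphere whose intersection with a halfspace is anti-concentrated.

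The algorithm itself is simple. Sample $g$, set $x_i=\mathrm{sgn}\langle g,v_i\rangle$, then flip independently each vertex with a small probability $p$, but only among vertices whose flip gain is positive, or only those with $|\langle g,v_i\rangle|\le\tau$. I then compute the expected gain minus the second-order loss from pairs of adjacent flipped vertices. The loss is $O(p^2)$ times the weight, and the gain is $p\cdot 2^{-O(d)}$ times the weight. Choosing $p=2^{-\Theta(d)}$ yields net $2^{-O(d)}\cdot\mathrm{SDP}$. The triangle inequalities are used to rule out the configurations, like the one behind the Feige--Schechtman-type integrality gaps, where near-critical edges are balanced around every vertex. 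They also serve to relate the per-vertex imbalance to the geometric lemma through a weighted, rather than counted, version.

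The main obstacle is the geometric step: proving the imbalance lower bound with the optimal $2^{-\Theta(d)}$ dependence for arbitrary weighted near-critical graphs on the sphere. Its difficulty is signalled by the footnote. My first attempt would be a double-counting/averaging argument over $g$: express $\mathbb{E}_g[\sum_i (\text{gain}_i)_+]$ and lower-bound it via anti-concentration of $\sum_j w_{ij}\,x_ix_j$ for vertices near the hyperplane. Using Gaussian small-ball estimates in $d$ dimensions would give the $2^{-O(d)}$ factor. I would reduce the weighted case to the unweighted one by splitting heavy vertices into clones, which preserves low dimension.
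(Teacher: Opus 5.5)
Your outer framework matches the paper's: GW rounding; a case split on whether a $2^{-O(d)}$ fraction of the SDP weight lies on edges with $\rho_{ij}$ far from $\rho^*$; local improvement restricted to vertices with $|\langle g,v_i\rangle|$ tiny; and cloning to reduce weights to counts. Your random-flip variant is also fine, since two adjacent near-critical vertices are both in the band only with probability $O(\tau^2)$. The gap is in the step that carries all the content.

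Write $v_j=\rho_{ij}v_i+\sqrt{1-\rho_{ij}^2}\,u_j$ with $u_j\perp v_i$. For a vertex $i$ conditioned on $|\langle g,v_i\rangle|<\varepsilon$, what you actually need is $\mathbb{E}\bigl[\bigl(\sum_j w_{ij}\,\mathrm{sgn}\langle g,u_j\rangle\bigr)^2\bigr]\ge 2^{-O(d)}W_i^2$. The uncut weight at $i$ minus $W_i/2$ has conditional mean only $-O(\varepsilon)W_i$ and absolute value at most $W_i/2$, so this second moment becomes a lower bound on the \emph{expected positive part} of the gain. That is stronger than your ``positive gain with probability $2^{-O(d)}$''.

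This inequality is false for arbitrary unit vectors, contrary to what you expect the geometric theorem to say. If the $u_j$ come in antipodal pairs of equal weight, the sum is identically $0$ for every $g$. This happens, for instance, when $d=2$ and the neighbors sit at angles $\pm\theta^*$ from $v_i$. Your remark that the $(d-2)$-sphere of directions at angle $\theta^*$ is anti-concentrated says nothing about an arbitrary finite weighted subset of it.

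The missing hypothesis is $\langle u_j,u_k\rangle\ge -1+\Omega(1)$, and supplying it is exactly the role of the triangle inequalities. From $\|v_i+v_j\|^2+\|v_i+v_k\|^2\ge\|v_j-v_k\|^2$ you get $\rho_{jk}\ge -1-\rho_{ij}-\rho_{ik}\approx 0.38$, hence $\langle u_j,u_k\rangle\ge -0.2$. The role you assign them, excluding three vectors pairwise at angle $\theta^*>120^\circ$, is vacuous: positive semidefiniteness alone already rules that out.

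Even with the correct hypothesis, Gaussian small-ball estimates or a cap containing a $2^{-O(d)}$ fraction of the vectors do not prove the bound. They give you a cluster of $m\approx 2^{-O(d)}n$ agreeing signs, but nothing prevents the other $n-m$ signs from being negatively correlated with the cluster's and cancelling it. For example, correlating $X$ with $\mathrm{sgn}\langle g,y\rangle$ for a cluster center $y$ gives $\frac{2}{\pi}\sum_j\arcsin\langle u_j,y\rangle$. The cluster contributes only about $m$ to this, while the remaining terms can total about $-(n-m)\cdot\frac{2}{\pi}\arcsin(0.9)$.

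The paper controls this cancellation by positivity of every term in an expansion. By Sheppard's formula,
$\mathbb{E}[X^2]=\frac{2}{\pi}\sum_{j,k}\arcsin\langle u_j,u_k\rangle=\frac{2}{\pi}\sum_{\ell\ge 0}c_\ell S_{2\ell+1}$,
where $c_\ell>0$ and $S_p=\sum_{j,k}\langle u_j,u_k\rangle^p=\bigl\|\sum_j u_j^{\otimes p}\bigr\|^2\ge 0$. So it suffices to make a single term with odd $p=\Theta(d)$ large. In that term, a net/pigeonhole cluster contributes at least $m^2\,0.98^p$, every other pair contributes at least $-0.9^p$, and taking $p=\Theta(d)$ makes the cluster win. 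Alternatives are a rank bound $\binom{d+p-1}{p}$ on the Hadamard power combined with a trace/Frobenius inequality, or a Gegenbauer-coefficient argument. Without this idea or an equivalent one, your proposal does not establish the $2^{-O(d)}$ improvement.
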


The algorithm is a local-improvement post-processing of GW rounding, following the framework of
Feige--Karpinski--Langberg~\cite{FKL02} and its sharpened analysis by Hsieh--Kothari~\cite{HK23} for Max-Cut with the bounded (maximum) degree.
Our contribution is a new geometric theorem for low-dimensional Gaussian signs, which lets us replace
the dependence on the (possibly unbounded) graph degree by an explicit dependence on the SDP
dimension.

Finally, we also mention that our proof directly extends to the weaker {\em local dimension} condition on the SDP solution $\{ v_i \}_{i \in V}$: for every $i \in V$, the vectors of its neighbors (i.e., $\{ v_j : (j, i) \in E \}$) lie in a $d$-dimensional subspace. In particular, our result holds for any graphs with maximum degree $d$ as well, though our improvement of $2^{-O(d)}$ over $\alpha_{GW}$ is much worse than the $1/\widetilde{\Omega}(d^2)$ via the direct analysis~\cite{HK23}. 

\subsection{Technical overview}\label{subsec:overview}
The proof of Theorem~\ref{thm:main} decomposes into a geometric statement and its algorithmic
application.

\paragraph{1. A geometric anti-concentration theorem.}
Let \(v_1,\dots,v_n\) be unit vectors in \(\mathbb{R}^d\) such that
\(\langle v_i,v_j\rangle \ge -0.9\) for all \(i\neq j\).
Sample a random Gaussian vector \(g\sim \mathcal{N}(0,I_d)\) and define
\(X_i=\mathrm{sgn}(\langle g,v_i\rangle)\in\{\pm1\}\) and \(X=\sum_{i=1}^n X_i\).
Our main geometric theorem (proved in Section~\ref{sec:geometry}) shows that
\[
  \E[X^2] \ \ge\ n^2\cdot 2^{-O(d)}.
\]
In other words, the random hyperplane induced by \(g\) produces a nontrivial imbalance on the vectors
\(\{v_i\}\).
We present three self-contained proofs, each emphasizing a different viewpoint (tensor moments,
Gaussian/Hermite analysis, and positive-definite kernels on the sphere).

\paragraph{2. From geometry to an improved Max-Cut rounding.}
Given the geometric theorem, our algorithm follows the framework of
Feige--Karpinski--Langberg~\cite{FKL02}. After performing the standard hyperplane rounding algorithm (i.e., sampling a random Gaussian $g \sim \mathcal{N}(0, I)$ and setting $x_i \leftarrow \sgn(\langle g, v_i \rangle)$), the algorithm tries to improve the solution further by performing the local improvement. 

More precisely, let $C := \{ i \in V: |\langle g, v_i \rangle| \leq \eps \}$ for some $\eps > 0$ be the set of {\em candidates}. We flip a candidate $i$ to the other value (i.e., $x_i \leftarrow -x_i$) if the {\em local gain}, defined as 
\begin{equation*}
\sum_{(i, j) \in E: j \notin C \mbox{ and } x_i = x_j} w(i, j) -
\sum_{(i, j) \in E: j \in C \mbox{ or } x_i \neq x_j} w(i, j),
\end{equation*}
is strictly positive; it words, we execute the flip of $i$ if its gain from its non-candidate neighbors is strictly greater than its loss from its non-candidate neighbors {\em plus} its total weight to all the candidate neighbors. 
This conservative flip strategy ensures that (1) the order of the flips is irrelevant, (2) the Max-Cut value never decreases by any flip, and (3) it suffices to analyze the expected {\em local gain} at each $i$ conditioned on $i$ being a candidate.

The Hsieh--Kothari analysis~\cite{HK23} shows that the expected {\em local gain} can be expressed in terms of
correlations among the signs of certain Gaussian projections of the neighbors' SDP vectors, and 
proves a quantitative lower bound on this correlation by a factor polynomial in the degree.

In our setting, the SDP vectors already live in \(\mathbb{R}^d\), hence the relevant Gaussian process has
rank at most \(d\) regardless of the graph degree.
The geometric theorem above provides a correlation lower bound that only depends on the dimension $d$,
namely \(2^{-O(d)}\), which is enough to turn the local improvement into an additive gain of
\(2^{-O(d)}\) times the total edge weight.


%% file: geometry.tex
\newcommand{\Pnl}{\Delta}
\newcommand{\pnl}{\delta}

\newmuskip\pFqmuskip
\newcommand{\pFqcomma}{{\normalcomma}\mskip\pFqmuskip}
\newcommand*\pFq[6][8]{%
  \begingroup 
  \pFqmuskip=#1mu\relax
  \mathchardef\normalcomma=\mathcode`,
  \mathcode`\,=\string"8000
  \begingroup\lccode`\~=`\,
  \lowercase{\endgroup\let~}\pFqcomma
  {}_{#2}F_{#3}{\left[\genfrac..{0pt}{}{#4}{#5};#6\right]}%
  \endgroup
}

This section proves our key geometric theorem, which can be viewed as an anti-concentration
statement for signs of low-dimensional Gaussian projections.
Throughout, \(g\sim\mathcal{N}(0,I_d)\) denotes a standard Gaussian vector in \(\mathbb{R}^d\).
For a real number \(t\), let \(\mathrm{sgn}(t)\in\{\pm1\}\) be the sign of \(t\) (ties at \(0\) can be
broken arbitrarily since they occur with probability \(0\)).

\begin{theorem}[Geometric theorem]
\label{thm:geom}
Let \(v_1,\dots,v_n\in\mathbb{R}^d\) be unit vectors satisfying
\(\langle v_i,v_j\rangle\ge -0.9\) for all \(i\neq j\). Sample \(g\sim\mathcal{N}(0,I_d)\) and define
\(X_i=\mathrm{sgn}(\langle g,v_i\rangle)\) and \(X=\sum_{i=1}^n X_i\). Then
\[
  \E[X^2] \ \ge\ n^2\cdot 2^{-O(d)}.
\]
The hidden constant in \(O(d)\) depends only on the threshold \(-0.9\).
\end{theorem}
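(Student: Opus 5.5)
Our plan is to expand the second moment into pairwise sign correlations. By the standard Grothendieck/GW identity, $\E[X_iX_j] = \frac{2}{\pi}\arcsin\langle v_i,v_j\rangle$, so
\[
  \E[X^2] \;=\; \sum_{i,j} \tfrac{2}{\pi}\arcsin(\rho_{ij}), \qquad \rho_{ij}:=\langle v_i,v_j\rangle .
\]
The kernel $K(x,y)=\frac{2}{\pi}\arcsin\langle x,y\rangle$ is positive definite on the sphere $S^{d-1}$. Its Taylor series has nonnegative coefficients, and only odd powers appear. Hence, for the empirical measure $\mu=\frac1n\sum_i\delta_{v_i}$, we have $\E[X^2]/n^2=\iint K\,d\mu\,d\mu$. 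We expand $K$ in Gegenbauer (spherical harmonic) components $K=\sum_{k\text{ odd}} a_k\, P_k^{(d)}(\langle x,y\rangle)$ with $a_k\ge 0$, which gives
\[
  \frac{\E[X^2]}{n^2} \;=\; \sum_{k \text{ odd}} a_k \,\bigl\|\Pi_k\mu\bigr\|^2,
\]
where $\Pi_k\mu$ is the projection of $\mu$ onto degree-$k$ harmonics (suitably normalized). The task is to show that some odd-degree component with a non-tiny coefficient carries a non-negligible fraction of the mass.

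\textbf{Key step: ruling out "nearly antipodal-symmetric" measures.} Suppose all odd components $\Pi_k\mu$ for $k\le k_0$ are tiny. Then $\mu$ is close to its even part, i.e.\ close to its antipodal symmetrization $\frac12(\mu+\mu^-)$ when tested against low-degree polynomials. The constraint $\rho_{ij}\ge -0.9$ says $\mu\otimes\mu$ puts no mass on the cap $\{\langle x,y\rangle<-0.9\}$. Equivalently, $\mu\otimes\mu^-$ puts no mass on $\{\langle x,y\rangle>0.9\}$. Meanwhile $\mu\otimes\mu$ itself has mass on the diagonal, and at positive scale $\delta$ pigeonhole gives $\iint \mathbf 1[\langle x,y\rangle>0.9]\,d\mu\,d\mu\ge 2^{-O(d)}$, since $S^{d-1}$ is covered by $2^{O(d)}$ caps of angular radius small enough that two points in one cap have inner product above $0.9$. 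So the even-versus-odd asymmetry is witnessed by a test function $f(t)$ that equals $1$ for $t>0.9$ and $0$ for $t<-0.9$ (roughly). Concretely, we would take a polynomial $p$ of degree $k_0=O(d)$, or better a fixed-degree polynomial with $p(t)\ge \mathbf{1}[t>0.9]\cdot c$ on $[0.9,1]$, $p$ small on $[-0.9,0.9]$, and $p$ arbitrary but controlled on $[-1,-0.9]$ where there is no mass. Writing $p=p_{\text{even}}+p_{\text{odd}}$, we get
\[
  \iint p_{\text{odd}}\,d\mu\,d\mu = \tfrac12\Bigl(\iint p\,d\mu\,d\mu-\iint p\,d\mu\,d\mu^-\Bigr)\ \ge\ 2^{-O(d)}
\]
once $p$ is chosen so that the first term is large by the cap-packing bound and the second is small because $\mu\otimes\mu^-$ avoids $\{t>0.9\}$. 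Then $\iint p_{\text{odd}}$ is bounded by $\sum_{k\le \deg p,\ k\text{ odd}}|c_k|\,\|\Pi_k\mu\|^2$, where $c_k$ are the Gegenbauer coefficients of $p_{\text{odd}}$. Comparing with $a_k$, the arcsine coefficients in dimension $d$, which decay at most like $2^{-O(k)}$ times dimension factors, yields $\E[X^2]/n^2\ge 2^{-O(d)}$ provided $\deg p=O(d)$ and the coefficient ratios $|c_k|/a_k$ are $2^{O(d)}$.

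\textbf{Main obstacle.} The hard part is choosing $p$ and controlling the Gegenbauer coefficients. We need $p$ to be a bump near $t=1$ that is small on $[-0.9,0.9]$, so that the "small" middle region cannot overwhelm the $2^{-O(d)}$ diagonal mass. We also need its odd Gegenbauer coefficients to be dominated by $2^{O(d)}a_k$. We would first try $p(t)=\bigl(\frac{1+t}{2}\bigr)^{m}$ with $m=\Theta(d)$. It is nonnegative, its Taylor expansion has positive coefficients, it is at most $0.95^m$ on $[-1,0.9]$, and it is at least $\approx 1$ near $t=1$. The cap-packing bound must beat $0.95^m$, which forces $m=Cd$, giving the $2^{-O(d)}$. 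To compare $p_{\text{odd}}$ with $\arcsin$, we would use monomials: $t^{k}$ for odd $k\le m$ has coefficient at most $2^{O(d)}$ in $p$, while $\arcsin$ has coefficient $\gtrsim k^{-3/2}$ on $t^k$. Positivity of $\iint t^{k}\,d\mu\,d\mu$ (Schur product) then gives $\iint p_{\text{odd}}\le 2^{O(d)}\iint\arcsin$ directly, termwise, with no need for Gegenbauer expansions. This termwise monomial domination is the crux, and we expect it to go through cleanly.
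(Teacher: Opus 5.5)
Once you drop the Gegenbauer framing, your final argument is essentially the paper's Proof~I-(i). You expand $\arcsin$ in its positive Taylor series, use $S_k=\|\sum_i v_i^{\otimes k}\|_2^2\ge 0$ to discard and compare terms, and lower-bound a combination of odd power sums by a net/pigeonhole cluster played against the floor $\rho_{ij}\ge -0.9$; the only change is that your test polynomial is the odd part of $((1+t)/2)^m$, i.e.\ $2^{-m}\sum_{k\text{ odd}}\binom{m}{k}t^k$, instead of a single monomial $t^p$.

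The cluster threshold needs repair. Caps that only guarantee inner products above $0.9$ give $p\ge 0.95^m$ on the cluster, which cannot beat the $0.95^m$ error term, and demanding $p\approx 1$ there would force caps of radius $O(1/\sqrt{d})$, hence $d^{\Theta(d)}$ of them. Instead fix $\tau=0.98$ as the paper does, using a $0.1$-net with $M=21^d$ points: the cluster then contributes at least $0.99^m/M^2$, and choosing $m=\Theta(d)$ with $(0.99/0.95)^m\ge 2M^2$ finishes the proof.
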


The statement is robust: any fixed lower bound \(\langle v_i,v_j\rangle\ge -1+\Omega(1)\) yields
\(\E[X^2]\ge n^2\cdot 2^{-O(d)}\) with constants depending only on the gap from \(-1\).
We keep \(-0.9\) for concreteness.
For completeness, we present three proofs for this theorem using different techniques. 
Proof~I first reduces the theorem to showing that an $\Theta(d)$-th term in the power series expansion of $\E[X^2]$ is large (Section~\ref{subsec:power}), which is established via two different routes of an $\eps$-net argument (Proof~I-(i), Section~\ref{subsubsec:eps}) and matrix algebra (Proof~I-(ii), Section~\ref{subsubsec:matrix}). Proof II uses the theory of Gegenbauer polynomials (Section~\ref{subsec:gegen}).

\subsection{Proof I: Power Series}
\label{subsec:power}
The first proof analyzes the power series of the $\E[X^2]$ and shows that every term is nonnegative and one term is at least $2^{-\Omega(d)}$. 
Let us begin with the standard expression for correlations of Gaussian halfspaces.

\begin{lemma}[Sheppard's formula]
\label{lem:sheppard}
Let \((Y,Z)\) be a centered bivariate normal with \(\E[Y^2]=\E[Z^2]=1\) and
\(\E[YZ]=\rho\in[-1,1]\). Then
\[
  \E\bigl[\mathrm{sgn}(Y)\,\mathrm{sgn}(Z)\bigr] = \frac{2}{\pi}\arcsin(\rho).
\]
\end{lemma}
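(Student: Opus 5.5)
The statement immediately above is Sheppard's formula (Lemma~\ref{lem:sheppard}). I would prove it by reducing to a two-dimensional rotationally invariant picture and computing a probability as an angle.

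\textbf{Step 1: realize the pair by unit vectors.} Since \((Y,Z)\) is centered Gaussian with unit variances and covariance \(\rho\), choose unit vectors \(u,w\in\mathbb{R}^2\) with \(\langle u,w\rangle=\rho\). Let \(\theta=\arccos(\rho)\in[0,\pi]\) be the angle between them. For \(g\sim\mathcal{N}(0,I_2)\), the pair \((\langle g,u\rangle,\langle g,w\rangle)\) has the same law as \((Y,Z)\). It therefore suffices to compute \(\E[\mathrm{sgn}\langle g,u\rangle\,\mathrm{sgn}\langle g,w\rangle]\).

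\textbf{Step 2: compute the disagreement probability.} By rotational invariance of \(g\), its direction \(g/\|g\|\) is uniform on the circle, and \(g=0\) has probability zero. The signs of \(\langle g,u\rangle\) and \(\langle g,w\rangle\) differ exactly when the line \(g^\perp\) separates \(u\) from \(w\). Equivalently, the direction of \(g\) lies in one of two antipodal arcs, each of angular length \(\theta\). Hence
\[
\Pr[\mathrm{sgn}\langle g,u\rangle\neq\mathrm{sgn}\langle g,w\rangle]=\frac{2\theta}{2\pi}=\frac{\theta}{\pi}.
\]
This is the same computation that gives the cut probability \(\arccos(\rho)/\pi\) quoted in the introduction.

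\textbf{Step 3: convert to the expectation.} The product of signs is \(+1\) on agreement and \(-1\) on disagreement, so
\[
\E[\mathrm{sgn}(Y)\,\mathrm{sgn}(Z)]=1-\frac{2\theta}{\pi}.
\]
Using \(\arccos\rho=\frac{\pi}{2}-\arcsin\rho\), this equals \(\frac{2}{\pi}\arcsin(\rho)\).

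\textbf{Edge cases.} When \(\rho=\pm1\) we have \(Z=\pm Y\) almost surely, and the formula holds directly. The only step needing care is the angle computation in Step~2, namely that the set of directions producing disagreement consists of exactly two arcs of length \(\theta\). I would verify it by writing \(u=(1,0)\) and \(w=(\cos\theta,\sin\theta)\), and checking which angles \(\phi\) of \(g\) satisfy \(\cos\phi\cdot\cos(\phi-\theta)<0\).
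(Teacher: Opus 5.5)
Your proof is correct. You realize $(Y,Z)$ as $(\langle g,u\rangle,\langle g,w\rangle)$ with $g\sim\mathcal{N}(0,I_2)$, get the disagreement probability $\arccos(\rho)/\pi$ from the two antipodal arcs, and apply $\arccos\rho=\frac{\pi}{2}-\arcsin\rho$. The separate treatment of $\rho=\pm1$ is harmless but unnecessary, since the arc computation already covers $\theta\in\{0,\pi\}$.

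The paper gives no proof of this lemma; it cites it as the standard expression for Gaussian halfspace correlations. Your derivation is the classical one, and it matches the Goemans--Williamson cut probability $\arccos(\rho)/\pi$ quoted in the paper's introduction.
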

Applying it with \(Y=\langle g,v_i\rangle\) and \(Z=\langle g,v_j\rangle\) gives
\begin{equation}
\label{eq:ex2-arcsin}
  \E[X^2] = \sum_{i,j}\E[X_iX_j] = \frac{2}{\pi}\sum_{i,j=1}^n \arcsin\bigl(\langle v_i,v_j\rangle\bigr).
\end{equation}
We will also repeatedly use the Taylor series
\begin{equation}
\label{eq:arcsin-series}
  \arcsin(t)=\sum_{k=0}^{\infty} c_k\,t^{2k+1},\qquad
  c_k \coloneqq  \frac{(2k)!}{4^k\,(k!)^2\,(2k+1)} > 0.
\end{equation}
A standard Stirling estimate gives \(c_k=\Theta(k^{-3/2})\).
In particular, there exists an absolute constant \(c>0\) such that
\begin{equation}
\label{eq:ck-lower}
  c_k \ge \frac{c}{(k+1)^{3/2}}\qquad\text{for all }k\ge 0.
\end{equation}

Let $t_{ij}\coloneqq \langle v_i,v_j\rangle\in[-0.9,1]$.
Using~\eqref{eq:arcsin-series}, we may expand termwise:
\[
\E[X^2]
= \sum_{i,j=1}^n \sum_{k=0}^\infty a_k\, t_{ij}^{2k+1}
= \sum_{k=0}^\infty a_k \underbrace{\sum_{i,j=1}^n t_{ij}^{2k+1}}_{=:S_{2k+1}}.
\]
Note that for any $k \geq 0$, 
\[
S_{2k+1} = \sum_{i,j} \langle v_i^{\otimes 2k+1}, v_j^{\otimes 2k+1} \rangle 
= 
\Big\| \sum_i v_i^{\otimes 2k+1} \Big\|_2^2 \geq 0.
\]
Since $a_k>0$ for any $k \geq 0$, hence for any fixed $k$,
\[
\E[X^2]\ \ge\ a_k\, S_{2k+1}.
\]
Therefore, Theorem~\ref{thm:geom} follows from the following lemma.

\begin{lemma}
\label{lem:power}
There exists an odd integer $p = 2k+1 = \Theta(d)$ with $S_p \geq n^2 \cdot 2^{-\Theta(d)}$. 
\end{lemma}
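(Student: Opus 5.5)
We need $p$ odd, $p=\Theta(d)$, with $S_p=\sum_{i,j}t_{ij}^p \ge n^2 2^{-\Theta(d)}$. The plan is to split the sum into positive and negative contributions and play them against each other. Pairs with $t_{ij}\ge 0$ contribute nonnegatively. Pairs with $t_{ij}<0$ have $|t_{ij}|\le 0.9$, so each contributes at least $-(0.9)^p$, and the total negative part is at least $-n^2(0.9)^p$. It therefore suffices to show that the positive part satisfies
\[
\sum_{i,j:\,t_{ij}\ge 0} t_{ij}^p\ \ge\ 2\,n^2 (0.9)^p ,
\]
with $p=C d$ for a suitable constant $C$. The positive part decays like $\beta^p$ for pairs with $t_{ij}\ge\beta$, where $\beta$ is a constant close to $1$. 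It is not guaranteed to decay as a fixed fraction of $n^2$ independent of $d$, however: a packing argument only ensures a $2^{-O(d)}$ fraction of pairs are that close. The task is to choose $\beta>0.9$ so that $\beta^p\cdot 2^{-O(d)}$ beats $(0.9)^p$ once $p=Cd$ with $C$ large.

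The concrete steps follow. First, I cover the unit sphere $S^{d-1}$ by $N\le (3/\eta)^d$ caps of angular radius $\eta$, where $\eta$ is small enough that any two vectors in the same cap have inner product at least $\beta:=1-\eta^2\ge 0.95$, say. Second, by Cauchy--Schwarz, if cap $m$ contains $n_m$ of the vectors, then the number of ordered pairs in a common cap is
\[
\sum_m n_m^2\ \ge\ \frac{n^2}{N}\ \ge\ n^2(\eta/3)^d .
\]
Each such pair has $t_{ij}\ge \beta\ge 0$, so the positive part is at least $n^2(\eta/3)^d\beta^p$. Third, I choose $p$ odd with $p\ge Cd$ so that
\[
(\eta/3)^d(\beta/0.9)^p\ \ge\ 2 .
\]
Since $\beta/0.9>1$ is a fixed constant, taking $C$ large relative to $\log(3/\eta)/\log(\beta/0.9)$ works. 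Then
\[
S_p\ \ge\ n^2(\eta/3)^d\beta^p-n^2(0.9)^p\ \ge\ \tfrac12 n^2(\eta/3)^d\beta^p\ =\ n^2 2^{-\Theta(d)},
\]
because $\beta^p=2^{-\Theta(d)}$ when $p=\Theta(d)$.

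The main obstacle is the balancing of the constants: the lower bound on the positive part must decay strictly more slowly in $p$ than the negative part. This works only because the negative correlations are bounded away from $-1$ (by $0.9$), while same-cap pairs can be made arbitrarily close to $1$ at the cost of a $2^{-O(d)}$ density factor. That factor is then overpowered by the exponential gap $(\beta/0.9)^p$. I would double-check the cap-covering bound $N\le (1+2/\eta')^d$ for the chosen chordal radius $\eta'$, and the relation between cap radius and inner product: chord length $r$ gives inner product $1-r^2/2$. The rest is routine.
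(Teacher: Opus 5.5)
Your proposal is correct and is essentially the paper's Proof I-(i). A net on the sphere yields a $2^{-O(d)}$ fraction of ordered pairs with inner product at least a constant $\beta>0.9$, and an odd $p=\Theta(d)$ makes $2^{-O(d)}\beta^p$ beat the $-0.9^p$ contribution of all other pairs. Your Cauchy--Schwarz over all caps gives density $1/N$, whereas the paper pigeonholes onto a single cap and gets $1/M^2$; this only changes the constant in the exponent.

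There are two small slips to fix. With (chordal) radius $\eta$, two points in a common cap have inner product at least $1-2\eta^2$, not $1-\eta^2$. Also, assign each vector to a single cap (say its nearest net point), so that $\sum_m n_m^2$ counts distinct ordered pairs rather than overcounting vectors that lie in overlapping caps.
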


\subsubsection{Proof I-(i): $\eps$-Net-Based Proof}
\label{subsubsec:eps}
One way to prove Lemma~\ref{lem:power} is a simple net-based argument, showing that there will be a large set of vectors whose pairwise inner products are close to $1$, so that the sum of these inner products raised to the $p$-th power dominates the sum of the $p$-th power of the other inner products, which is at most $(-0.9)^p$. 

To formalize this idea, fix $\varepsilon\coloneqq 0.1$, and set
\[
\tau \coloneqq  1-2\varepsilon^2 = 0.98,
\qquad
M \coloneqq  (1+2/\varepsilon)^d = 21^d.
\]
A standard $\eps$-net argument guarantees the existence of an $\eps$-net $N \subseteq \mathbb{S}^{d-1}$ with $|N| \le M$ so that every $x \in \mathbb{S}^{d-1}$ has $y \in N$ with $\| x - y \|_2 \leq \eps$. By the pigeonhole principle, there exists $y \in N$ such that the set
\[
T \coloneqq  \bigl\{ i \in \{1,\dots,n\} : \| y - v_i \|_2 \leq \eps \bigr\}
\]
has size $m \coloneqq  |T| \geq n/M$. By triangle inequality, any $i, j \in T$ satisfy:
\[
\| v_i - v_j \|_2 \leq 2\eps \iff \langle v_i, v_j \rangle \geq 1 - \frac{(2\eps)^2}{2} = 1 - 2\eps^2 = \tau.
\]

Let $p$ be an odd integer (to be set shortly). Then
\[
S_p = \sum_{i,j=1}^n t_{ij}^p
\ \ge\ \sum_{i,j\in T} t_{ij}^p \ +\ \sum_{\substack{(i,j)\notin T\times T}} t_{ij}^p.
\]
For $i,j\in T$, $t_{ij}\ge\tau$, so $t_{ij}^p\ge\tau^p$, hence
\[
\sum_{i,j\in T} t_{ij}^p \ge m^2 \tau^p.
\]
For $(i,j)\notin T\times T$, we only use $t_{ij}\ge -0.9$. Since $p$ is odd,
\[
t_{ij}^p \ge (-0.9)^p = -0.9^p,
\]
so
\[
\sum_{(i,j)\notin T\times T} t_{ij}^p \ge -(n^2-m^2)\,0.9^p \ge -n^2\,0.9^p.
\]
Thus
\[
S_p \ge m^2\tau^p - n^2 0.9^p \ge \frac{n^2}{M^2}\tau^p - n^2 0.9^p
= n^2\Bigl(\frac{\tau^p}{M^2} - 0.9^p\Bigr).
\]
We now choose $p$ so that $\tau^p/M^2 \ge 2\cdot 0.9^p$, i.e.
\[
\Bigl(\frac{\tau}{0.9}\Bigr)^p \ge 2M^2 = 2\cdot 21^{2d}.
\]
Since $\tau/0.9 >1$, this is possible with $p=\Theta(d)$. Concretely, define
\[
c \coloneqq  \frac{\ln(2\cdot 21^2)}{\ln(\tau/0.9)} > 0,
\]
and set
\[
p \coloneqq  2\lceil c d\rceil + 1.
\]
Since this choice of $p$ implies $\tau^p/M^2 \ge 2\cdot 0.9^p$, the previous bound yields
\[
S_p \ge n^2(2\cdot 0.9^p - 0.9^p) = n^2\cdot 0.9^p,
\]
which proves Lemma~\ref{lem:power} and Theorem~\ref{thm:geom}.

\subsubsection{Proof I-(ii): Matrix-Algebraic Proof}
\label{subsubsec:matrix}
Another way to prove Lemma~\ref{lem:power} is based on the following linear algebraic lemma.

\begin{lemma}[A PSD sum lower bound from rank and entry lower bound]\label{lem:psd-sum}
Let $A$ be an $n\times n$ real symmetric positive semidefinite matrix with
$A_{ii}=1$ for all $i$, $\mathrm{rank}(A)\le D$, and $A_{ij}\ge -\delta$ for all
$i\ne j$, where $\delta\ge 0$.
If $\delta D\le \tfrac12$, then
\[
\sum_{i,j=1}^n A_{ij}\ \ge\ \frac{n^2}{2(D+1)}.
\]
\end{lemma}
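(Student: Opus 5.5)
We use the standard trace/Frobenius comparison for a low-rank PSD matrix. Write $S\coloneqq \sum_{i,j}A_{ij}=\mathbf{1}^\top A\mathbf{1}\ge 0$. The natural tool is Cauchy--Schwarz on the nonzero eigenvalues: if $\lambda_1,\dots,\lambda_r\ge 0$ with $r\le D$ are the nonzero eigenvalues, then $\mathrm{tr}(A)=n=\sum\lambda_\ell$, so
\[
\|A\|_F^2=\sum_{i,j}A_{ij}^2=\sum_\ell \lambda_\ell^2\ \ge\ \frac{(\sum_\ell\lambda_\ell)^2}{r}\ \ge\ \frac{n^2}{D}.
\]
So the Frobenius mass is large, and the task is to convert a lower bound on $\sum A_{ij}^2$ into a lower bound on $\sum A_{ij}$, using the entry bounds $-\delta\le A_{ij}\le 1$. (The upper bound $|A_{ij}|\le 1$ follows from PSD with unit diagonal.)

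For each entry, $A_{ij}\ge -\delta$ and $A_{ij}\le 1$ give $(A_{ij}-1)(A_{ij}+\delta)\le 0$, i.e.
\[
A_{ij}^2\ \le\ (1-\delta)A_{ij}+\delta.
\]
This holds for diagonal entries too (equality when $A_{ii}=1$). Summing over all $n^2$ pairs gives
\[
\frac{n^2}{D}\ \le\ \sum_{i,j}A_{ij}^2\ \le\ (1-\delta)S+\delta n^2 .
\]
Hence $(1-\delta)S\ge n^2(1/D-\delta)$. Since $\delta D\le \tfrac12$, we have $\delta\le \frac{1}{2D}$, so $1/D-\delta\ge \frac{1}{2D}$. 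With $1-\delta\le 1$ this yields $S\ge \frac{n^2}{2D}\ge\frac{n^2}{2(D+1)}$. If $\delta\ge 1$ the hypothesis would force $D\le 1/2$, which is impossible for $n\ge1$, so $1-\delta>0$ and dividing is legitimate.

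The only real step is finding the quadratic inequality $A_{ij}^2\le(1-\delta)A_{ij}+\delta$, which linearizes the squared entries in terms of the entries using both the upper and lower bounds. Everything else is routine. The slack, $D$ versus $D+1$, suggests the authors may argue differently, for instance by projecting $\mathbf{1}$ onto the range. The argument above still proves the stated bound, and I would double-check the edge case $\mathrm{rank}(A)=0$, which is impossible since $\mathrm{tr}(A)=n>0$.
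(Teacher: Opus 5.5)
Your proof is correct and uses the same two ingredients as the paper: the eigenvalue Cauchy--Schwarz bound $\|\cdot\|_F^2 \ge (\mathrm{tr})^2/\mathrm{rank}$, and the entrywise inequality $(A_{ij}-1)(A_{ij}+\delta)\le 0$; the paper's $b^2\le(1+\delta)b$ for $b=A_{ij}+\delta$ is exactly this inequality. The only difference is that the paper applies Cauchy--Schwarz to the shifted matrix $B=A+\delta J$, whose rank is only bounded by $D+1$, whereas you apply it to $A$ directly; this explains the slack you noticed, and it is why you get the slightly stronger bound $n^2/(2D)$ instead of $n^2/(2(D+1))$.
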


\begin{proof}
Let $J$ be the all-ones matrix and define $B\coloneqq A+\delta J$.
Since $A\succeq 0$ and $J\succeq 0$, we have $B\succeq 0$.
Also $\mathrm{rank}(B)\le \mathrm{rank}(A)+\mathrm{rank}(J)\le D+1$.

Entrywise, $B_{ii}=1+\delta$, and for $i\ne j$,
\[
B_{ij}=A_{ij}+\delta\ \ge\ 0.
\]
Moreover, for a PSD matrix with $A_{ii}=1$ we have
$|A_{ij}|\le \sqrt{A_{ii}A_{jj}}=1$ by Cauchy--Schwarz, hence $A_{ij}\le 1$
and therefore $B_{ij}\le 1+\delta$.
Thus for every entry $b\coloneqq B_{ij}\in[0,1+\delta]$ we have $b^2\le (1+\delta)b$,
and summing gives
\[
\|B\|_F^2=\sum_{i,j} B_{ij}^2 \ \le\ (1+\delta)\sum_{i,j} B_{ij}.
\]

Let the nonzero eigenvalues of $B$ be $\lambda_1,\dots,\lambda_r$,
where $r=\mathrm{rank}(B)\le D+1$.
Then
\[
\mathrm{tr}(B)=\sum_{k=1}^r \lambda_k,\qquad \|B\|_F^2=\sum_{k=1}^r \lambda_k^2.
\]
By Cauchy--Schwarz,
\[
(\mathrm{tr}(B))^2
=\Big(\sum_{k=1}^r \lambda_k\Big)^2
\le r\sum_{k=1}^r \lambda_k^2
= r\|B\|_F^2
\le (D+1)\|B\|_F^2.
\]
Hence
\[
\|B\|_F^2\ge \frac{(\mathrm{tr}(B))^2}{D+1}.
\]
Since $\mathrm{tr}(B)=\sum_i(1+\delta)=n(1+\delta)$, combining with the
previous inequality yields
\[
(1+\delta)\sum_{i,j} B_{ij}\ \ge\ \frac{n^2(1+\delta)^2}{D+1},
\qquad\text{so}\qquad
\sum_{i,j} B_{ij}\ \ge\ \frac{n^2(1+\delta)}{D+1}.
\]
Finally, $\sum_{i,j} B_{ij}=\sum_{i,j} A_{ij}+\delta n^2$, so
\[
\sum_{i,j} A_{ij}
\ge \frac{n^2(1+\delta)}{D+1}-\delta n^2
= n^2\cdot\frac{1-\delta D}{D+1}
\ge \frac{n^2}{2(D+1)},
\]
using $\delta D\le \tfrac12$.
\end{proof}

\noindent We also use the following lemma that bounds the rank of Gram matrices of tensor powers of vectors.

\begin{lemma}[Rank bound for Hadamard powers of Gram matrices]
\label{lem:hadamard-rank}
Let $G$ be the Gram matrix $G_{ij}=\langle v_i,v_j\rangle$ for vectors
$v_i\in\mathbb{R}^d$.
For any integer $p\ge 1$, the matrix $G^{\circ p}$ with entries
$(G^{\circ p})_{ij}=\langle v_i,v_j\rangle^p$ is positive semidefinite and has
rank at most $\binom{d+p-1}{p}$.
\end{lemma}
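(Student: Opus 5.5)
The plan is to realize $G^{\circ p}$ as a Gram matrix of vectors living in a space of dimension $\binom{d+p-1}{p}$. Both positive semidefiniteness and the rank bound then follow at once, since a Gram matrix is PSD and its rank is at most the dimension of the ambient space containing the vectors.

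First, I would use the standard identity $\langle v_i,v_j\rangle^p = \langle v_i^{\otimes p}, v_j^{\otimes p}\rangle$ in $(\mathbb{R}^d)^{\otimes p}$. It follows by expanding $\bigl(\sum_{a} v_i(a)v_j(a)\bigr)^p$ as a sum over multi-indices $(a_1,\dots,a_p)\in[d]^p$ of $\prod_\ell v_i(a_\ell)v_j(a_\ell)$. This already shows that $G^{\circ p}$ is the Gram matrix of the vectors $v_i^{\otimes p}$. Hence $x^\top G^{\circ p}x=\|\sum_i x_i v_i^{\otimes p}\|_2^2\ge 0$ for all $x\in\mathbb{R}^n$, which is the PSD claim. (Alternatively one could cite the Schur product theorem.) At this stage the rank bound is only $d^p$.

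Second, to sharpen the rank bound, I would observe that each $v^{\otimes p}$ is a symmetric tensor: its coordinate at $(a_1,\dots,a_p)$ depends only on the multiset $\{a_1,\dots,a_p\}$. The symmetric tensors form a subspace of $(\mathbb{R}^d)^{\otimes p}$ whose dimension equals the number of multisets of size $p$ from $[d]$, namely $\binom{d+p-1}{p}$. All the vectors $v_i^{\otimes p}$ lie in this subspace.

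More concretely, I would group the terms by multiset $\alpha$ (a multi-index with $|\alpha|=p$) and use the multinomial theorem:
\[
\langle v_i,v_j\rangle^p=\sum_{|\alpha|=p}\binom{p}{\alpha}\, v_i^{\alpha}v_j^{\alpha}.
\]
Defining $\phi(v)_\alpha=\sqrt{\binom{p}{\alpha}}\,v^\alpha\in\mathbb{R}^{\binom{d+p-1}{p}}$ then gives $G^{\circ p}=\Phi\Phi^\top$, where $\Phi$ is the $n\times\binom{d+p-1}{p}$ matrix with rows $\phi(v_i)$. Therefore $\mathrm{rank}(G^{\circ p})\le\mathrm{rank}(\Phi)\le\binom{d+p-1}{p}$.

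There is no real obstacle in this argument. The only point requiring care is counting the multisets correctly by stars and bars, and checking that the multinomial weights are positive, so that their square roots are real.
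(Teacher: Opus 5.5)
Your proposal is correct and matches the paper's proof: $G^{\circ p}$ is the Gram matrix of the tensors $v_i^{\otimes p}$, which lie in the $\binom{d+p-1}{p}$-dimensional subspace of symmetric tensors, so it is PSD with rank at most $\binom{d+p-1}{p}$. Your multinomial feature map $\phi$ just writes this same observation in explicit coordinates; your definition of symmetry, as invariance under permuting the $p$ index positions, is the right one.
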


\begin{proof}
$G^{\circ p}$ is the Gram matrix of $\{w_i\}$ defined as $w_i \coloneqq  v_i^{\otimes p}$, hence PSD.
Each $w_i$, viewed as an order-$p$ tensor, is a symmetric tensor (i.e., $w_i[a_1, \dots, a_p] = w_i[\sigma(a_1), \dots, \sigma(a_p)]$ for any $a_1, \dots, a_p \in [d]$ and any permutation $\sigma : [d] \to [d]$), 
whose dimension is $\binom{d+p-1}{p}$.
Therefore the rank of the Gram matrix is at most that dimension.
\end{proof}

With these tools, we present the proof. Choose the odd integer
\[
p\coloneqq 100d+1.
\]
Let $A$ be the matrix $A_{ij}=t_{ij}^p$.
By Lemma~\ref{lem:hadamard-rank}, $A\succeq 0$ and
\[
\mathrm{rank}(A)\ \le\ D\coloneqq \binom{d+p-1}{p}
=\binom{101d}{100d+1}
\le \binom{101d}{100d}.
\]
Also $A_{ii}=1$. For $i\ne j$, since $t_{ij}\ge -0.9$ and $p$ is odd,
\[
A_{ij}=t_{ij}^p\ \ge\ (-0.9)^p=-0.9^p.
\]
We will apply Lemma~\ref{lem:psd-sum} with $\delta\coloneqq 0.9^p$.
To do so, we need $\delta D\le \tfrac12$.

For upper bounding $D$, 
we use the inequality
\[
\binom{N}{K}\le \frac{N^N}{K^K (N-K)^{N-K}}
\qquad (0\le K\le N),
\]
which gives 
\[
\binom{101d}{100d}
\le \frac{(101d)^{101d}}{(100d)^{100d}(d)^d}
=\Big(\frac{101^{101}}{100^{100}}\Big)^d
= \bigg(101\Big(1+\frac1{100}\Big)^{100}\bigg)^d
\le (101e)^d.
\]

Since $p=100d+1$, we have $\delta=0.9^p\le 0.9^{100d}=(0.9^{100})^d$.
Also $(1-x)\le e^{-x}$ gives $0.9^{100}=(1-0.1)^{100}\le e^{-10}$, hence
\[
\delta D\ \le\ (e^{-10})^d\,(101e)^d=(101e^{-9})^d.
\]
Since $e>2$, we have $e^9>2^9=512$ and thus $101e^{-9}<101/512<1/2$.

Therefore $\delta D<\tfrac12$, and Lemma~\ref{lem:psd-sum} yields
\[
S_p = \sum_{i,j=1}^n t_{ij}^p
=\sum_{i,j=1}^n A_{ij}
\ge \frac{n^2}{2(D+1)}
\ge \frac{n^2}{4D}.
\]
Using $D\le (101e)^d<512^d=2^{9d}$ gives
\[
S_p \ \ge\ \frac{n^2}{4\cdot 2^{9d}}
=\frac{n^2}{2^{9d+2}},
\]
proving Lemma~\ref{lem:power} and Theorem~\ref{thm:geom}.

\subsection{Proof II: Using Gegenbauer polynomials}
\label{subsec:gegen}
\paragraph*{High-level strategy: expand over harmonic basis.}
Let $X = \sum_{i=1}^n \operatorname{sgn}(\langle g, v_i \rangle)$. By linearity of expectation and Sheppard's formula for the signs of Gaussian random variables, we have:
\[
    \mathbb{E}[X^2] = \sum_{i,j=1}^n \mathbb{E}[\operatorname{sgn}(\langle g, v_i \rangle)\operatorname{sgn}(\langle g, v_j \rangle)] = \frac{2}{\pi} \sum_{i,j=1}^n \arcsin(\langle v_i, v_j \rangle)
\]
We are given $\langle v_i, v_j \rangle \ge -0.9$ for all $i \neq j$. 

To lower-bound this sum, we first expand $\arcsin(t)$ over the harmonic basis, similar to the previous work~\cite{woodruff2026accelerating}.
Intuitively, harmonic basis is to spherical functions as Fourier basis is to functions on the circle $\mathbb{S}^1$; every suitably-nice function $F$ (say, square-integrable) on the sphere can be decomposed into weighted sums of (spherical) harmonic polynomials $\set{H_k^m}_{m\in [N_k]}$\footnote{The degree parameter $k$ is  called the \emph{band index}, and $N_k = \smash{{d+k-1 \choose k} - {d+k-3 \choose k-2}}$ is the dimension of the degree-$k$ spherical harmonics on $\mathbb{S}^{d-1}$.}, where the weight coefficient of $H_k^m$ can be obtained by taking inner product $\seq{F, H_k^m}$ in $L_2$, the space of square-integrable functions.

Now, our target $\arcsin(t)$ is not a function on the sphere $\mathbb{S}^{d-1}$, but a \emph{kernel}; more precisely, it can be written as a kernel function $\arcsin(\seq{u,v})$ for two points $u$ and $v$ on the sphere.
By the theory of spherical harmonics (more precisely, the fact that any function $f$ continuous on $[-1,1]$
corresponds to a \emph{zonal function} $F_v(\cdot) \coloneqq f(\seq{\cdot,v})$ that is invariant under all rotations fixing $v$ on the sphere), $f$ can be expanded over an orthogonal basis of the zonal functions, known as the \emph{Gegenbauer polynomials}.
Let \EMPH{$P_{k}^{d-1}(t)$} be the \EMPH{Gegenbauer polynomial} of degree $k$ for $\mathbb{S}^{d-1}$, normalized such that $P_{k}^{d-1}(1) =~1$.\footnote{Also known as the \emph{ultraspherical polynomial}.  Notice that in some of the literature the polynomial is defined for sphere $\mathbb{S}^d$ in dimension $d$; in this paper we consistently use \EMPH{$P_{k}(t)$} to denote the Gegenbauer polynomial (of degree $k$) for $\mathbb{S}^{d-1}$.
We refer to the notes by Jean Gallier~\cite{Gallier} and the references within for an introduction.} 
%
These polynomials can be defined using \emph{Rodrigues' formula}~\cite[Proposition 1.23]{Gallier}: let $w \coloneqq \frac{d-3}{2}$.
\[
P_{k}^{d-1}(t) = \frac{(-1)^k}{2^k} \cdot \frac{\Gamma((d-1)/2)}{\Gamma(k+(d-1)/2)} \cdot \frac{1}{(1-t^2)^{w}} \cdot \frac{\partial^k}{\partial t^k} (1-t^2)^{ k+{w} }. 
\]
Throughout the section our sphere is of dimension $d-1$ for $d \ge 3$ and thus $w \coloneqq \frac{d-3}{2} \ge 0$, and define $\EMPH{$P_k(t)$} \coloneqq P_k^{d-1}(t)$.
For constant $d$, the Gegenbauer polynomial $P_{k}$ is always a polynomial in $t$ of degree $k$.

\medskip
Our goal is to construct another continuous function $Q(t) \coloneqq \sum_{k=0}^\infty q_k P_k(t)$ that lower-bounds $\arcsin(t)$.
Intuitively, the reason why working with the Gegenbauer basis gives more information than the polynomial basis (Taylor series) is that the corresponding Gegenbauer coefficients $\set{q_k}$ are always nonnegative, and every Gegenbauer polynomial $P_k$ itself is a kernel function 
%
and satisfies the addition formula \cite[Proposition~1.18]{Gallier}.  This implies that $P_k$ as a kernel is positive semi-definite, and the sum over all pairs $(v_i, v_j)$ is non-negative: 
\[
\sum_{i,j} P_k(\seq{v_i, v_j}) 
\propto \sum_{i,j} \sum_{m=1}^{N_k} H_k^m(v_i) H_k^m(v_j) 
= \sum_{m=1}^{N_k} \Paren{ \sum_{i} H_k^m(v_i) }^2
\ge 0. 
\tag{$\star$}
\]

%

\paragraph{Reduction to a lower-bound function.} 
\noindent Suppose we can construct a continuous lower-bound function $\EMPH{$Q(t)$} \coloneqq \sum_{k=0}^\infty q_k P_k(t)$ satisfying:
\begin{enumerate}[label=(\arabic*)]
    \item $Q(t) \le \arcsin(t)$ for all $t \in [-0.9, 1]$.
    \item $q_k \ge 0$ for all $k \ge 1$.
    \item $q_0 \ge 2^{-O(d)}$.
\end{enumerate}
For any set of vectors $v_i \in \mathbb{R}^d$, let $\EMPH{$\rho_{ij}$} \coloneqq \langle v_i, v_j \rangle$.
Then, evaluating the sum yields:
\begin{align*}
    \frac{\pi}{2n^2}\mathbb{E}[X^2] &= \frac{1}{n^2}\sum_{i,j=1}^n \arcsin(\rho_{ij}) \\
    &\overset{(1)}{\ge} \frac{1}{n^2}\sum_{i,j=1}^n Q(\rho_{ij}) \\
    &= \frac{1}{n^2}\sum_{i,j=1}^n \sum_{k=1}^\infty q_k P_k(\rho_{ij}) \\
    &= \smash{q_0 + \sum_{k=1}^\infty q_k \frac{1}{n^2} \sum_{i,j} P_k(\rho_{ij})} \\
    &\overset{\mathclap{ \text{($2$),($\star$)} }}{\ge} q_0 \\
    &\overset{(3)}{\ge} 2^{-O(d)}.
\end{align*}
Our goal reduces to rigorously constructing a valid $Q(t)$, satisfying (1), (2), and (3).

\subsubsection{Construction of the lower-bound function \boldmath{$Q(t)$}}
We define $Q(t)$ by subtracting a highly-concentrated polynomial penalty from $\arcsin(t)$.
Let $\EMPH{$\Pnl(t)$} \coloneqq (t+0.9)(1-t)^{10d}$; set
\[
    \EMPH{$Q(t)$} \coloneqq \arcsin(t) - C\cdot \Pnl(t),
\]
where $C > 0$ is a constant to be chosen later. 
Both the $\arcsin$ function and the polynomial penalty are continuous on $[-1,1]$ so $Q(t)$ can be expanded over the Gegenbauer basis: $Q(t)=\sum_{k=0}^\infty q_k P_k(t)$.
%
%
Since $(t+0.9) \ge 0$ for all $t \in [-0.9, 1]$ and $C > 0$, the penalty term is strictly non-negative on the domain of our inner products, guaranteeing $Q(t) \le \arcsin(t)$ on $[-0.9, 1]$ and thus satisfying (1).

\paragraph*{Selecting \boldmath{$C$} to guarantee \boldmath{$q_k \ge 0$}.}
Now we show that $Q(t)$ satisfies (2).
We expand $Q(t)$ over the spherical harmonic basis using Gegenbauer polynomials.
The Gegenbauer coefficients of $\Pnl(t)$ can be obtained by taking ``inner-product'' with $P_k(t)$: $\displaystyle\EMPH{$\pnl_k$} = \int_{-1}^1 \Pnl(t) P_k(t) \, d\sigma(t)$, where $d\sigma(t) = c_d (1-t^2)^w dt$ is the normalized spherical measure of $\mathbb{S}^{d-1}$ on $[-1, 1]$. 
Thus the Gegenbauer coefficients of $Q(t)$ are $q_k = c_k - C \pnl_k$, where \EMPH{$c_k$} are the Gegenbauer coefficients of $\arcsin(t)$; because $\arcsin(t)$ is a PSD kernel, $c_k \ge 0$.

\medskip
We prove the following useful property on the Gegenbauer coefficients of $\Pnl(t)$ in Section~\ref{sec:g-coefficients}.
\begin{restatable}{lemma}{qcoefficients}
\label{lem:q_coefficients}
Let $d \ge 3$.
For every even $k$, $\pnl_k \le 0$.
For every odd $k$, $\pnl_k \ge 0$.
Also, for any nonnegative integer $k \le 10d+1$, the above inequalities are strict; that is, $\pnl_k$ is a constant (depending on $d$) bounded away from zero.  In particular, $|\pnl_0| \ge C_0^d$ for some $C_0 > 1$.
\end{restatable}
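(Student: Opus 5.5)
The plan is to compute every coefficient $\pnl_k$ in closed form. We plug Rodrigues' formula into $\pnl_k=\int_{-1}^1 \Pnl(t)P_k(t)\,d\sigma(t)$ and integrate by parts $k$ times. The weight $(1-t^2)^{w}$ in $d\sigma$ cancels the factor $(1-t^2)^{-w}$ in Rodrigues' formula. This gives
\[
\pnl_k \;=\; \frac{(-1)^k c_d}{2^k}\,\frac{\Gamma((d-1)/2)}{\Gamma(k+(d-1)/2)}\int_{-1}^1 \Pnl(t)\,\frac{\partial^k}{\partial t^k}(1-t^2)^{k+w}\,dt .
\]
For $j<k$, the $j$-th derivative of $(1-t^2)^{k+w}$ vanishes at $t=\pm1$, because $k+w-j>0$ (this holds also when $w=0$, i.e.\ $d=3$). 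So all boundary terms drop. Each of the $k$ integrations by parts contributes a sign $-1$, and these cancel the prefactor $(-1)^k$. Hence
\[
\pnl_k=\kappa_k\int_{-1}^1 \Pnl^{(k)}(t)\,(1-t^2)^{k+w}\,dt
\]
with an explicit constant $\kappa_k>0$. The sign of $\pnl_k$ is therefore the sign of this last integral.

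Next we substitute $s=1-t\in[0,2]$ and write $N\coloneqq 10d$. Then $\Pnl=(1.9-s)s^N=1.9s^N-s^{N+1}$ and $d/dt=-d/ds$, so
\[
\Pnl^{(k)}=(-1)^k\bigl[1.9\,N^{\underline k}\,s^{N-k}-(N+1)^{\underline k}\,s^{N+1-k}\bigr],
\]
where $m^{\underline k}$ denotes the falling factorial. Also $(1-t^2)^{k+w}=s^{k+w}(2-s)^{k+w}$. Each resulting integral is a Beta integral:
\[
I(m)\coloneqq\int_0^2 s^{m+w}(2-s)^{k+w}\,ds=2^{m+k+2w+1}B(m+w+1,k+w+1).
\]
From this one gets $I(N+1)/I(N)=2(N+w+1)/(N+k+2w+2)$. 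Therefore, for $k\le N$,
\[
\pnl_k=(-1)^k\kappa_k\,N^{\underline k}\,I(N)\Bigl[1.9-\frac{2(N+1)(N+w+1)}{(N+1-k)(N+k+2w+2)}\Bigr].
\]
The claimed sign pattern (nonpositive for even $k$, nonnegative for odd $k$, strict for $k\le N+1$) is then equivalent to the bracket being strictly negative.

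The key inequality, and the main point to check, is $1.9\,(N+1-k)(N+k+2w+2)<2(N+1)(N+w+1)$. The product on the left equals $(N+1)(N+2w+2)-k(k+2w+1)$, so it is maximized at $k=0$. It therefore suffices to check $1.9(N+2w+2)<2(N+w+1)$. With $N=10d$ and $2w=d-3$, the ratio $(N+2w+2)/(N+w+1)=(11d-1)/(10.5d-0.5)$ is below $1.05$. Since $1.9\cdot1.05<2$, the bracket is strictly negative, with margin bounded away from $0$ uniformly in $k$ and $d$.

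The remaining cases are routine. For $k=N+1$, only the $s^{0}$ term survives, so $\pnl_k$ has sign $(-1)^{k+1}$ and is nonzero. For $k>N+1$, $\Pnl^{(k)}\equiv0$, so $\pnl_k=0$ and the inequalities hold non-strictly. Finally, for $|\pnl_0|$ we use the exact formula at $k=0$: $|\pnl_0|$ equals $c_d\,I(N)$ times a constant-size factor from the bracket. We evaluate $c_d I(N)$ by writing $c_d^{-1}=\int_{-1}^1(1-t^2)^w\,dt$ as a Beta function and estimating the ratio of Gamma functions with Stirling's formula; the factor $2^{N+2w+1}=2^{\Theta(d)}$ combined with the Beta ratio gives an exponential-in-$d$ bound. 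I expect this Stirling bookkeeping to be the only genuinely tedious step; the sign pattern itself reduces to the single elementary inequality above.
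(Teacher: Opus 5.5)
Your proposal is correct and follows essentially the same route as the paper: Rodrigues' formula with $k$-fold integration by parts, reduction to Beta integrals, and the observation that the ratio $\frac{2(N+1)(N+w+1)}{(N+1-k)(N+k+2w+2)}$ is minimized at $k=0$, where it equals $\frac{21d-1}{11d-1}>1.9$. (The paper organizes the computation through the monomial integrals $I_k(A)=\int_{-1}^1(1-t)^A P_k(t)\,d\sigma(t)$ and linearity, rather than differentiating $\Delta$ directly.)

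The one step you leave as a sketch is the bound $|\delta_0|=\frac{d+9}{10(11d-1)}\,\mathbb{E}_\sigma[(1-t)^{10d}]$, and there the sign of the exponent genuinely needs checking. The Beta ratio $B(N+w+1,w+1)/B(w+1,w+1)$ decays roughly like $e^{-1.34d}$, which is beaten by $2^{10d}\approx e^{6.93d}$. A Stirling-free way to finish is
$$\mathbb{E}_\sigma[(1-t)^{10d}]\ge (3/2)^{10d}\Pr_\sigma[t\le -\tfrac12]\ge \tfrac18 (3/2)^{10d}(7/16)^{(d-3)/2}.$$
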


\medskip
\noindent Now we proceed to show that $q_k \ge 0$ for every $k\ge 1$ using case analysis.
\begin{itemize}
    \item \textit{Every $k > 10d+1$}: $\pnl_k = 0$, and thus $q_k = c_k \ge 0$. This is because $\Pnl(t)$ is a polynomial of degree $10d+1$;
    every $P_k(t)$ is a degree-$k$ polynomial, so switching from standard polynomial basis to Gegenbauer basis shows that every $\pnl_k$ at degree above $10d+1$ must vanish.
    
    \item \textit{Even $k$ between $2$ and $10d$:}  Because $\arcsin(t)$ is an odd function with purely positive Taylor coefficients, $c_k = 0$ for even $k$.
    By Lemma~\ref{lem:q_coefficients}, $\pnl_k < 0$, which implies $q_k = C |\pnl_k| \ge 0$ for any $C>0$.
    
    \item \textit{Odd $k$ between $1$ and $10d+1$:} Again by $\arcsin(t)$ being an odd function and Lemma~\ref{lem:q_coefficients}, $c_k > 0$ and $\pnl_k > 0$.  We require $q_k = c_k - C \pnl_k \ge 0$. 

\end{itemize}

We can satisfy all constraints by setting 
\[
    C \coloneqq \min_{\substack{1 \le k \le 10d+1 \\ k \text{ odd}}} \frac{c_k}{|\pnl_k|}.
\]
Because this is a minimum over a finite set of strictly positive values, $C > 0$. With this carefully tuned $C$, $q_k \ge 0$ strictly holds for all $k \ge 1$, which proves that $Q(t)$ satisfies (2).

\paragraph*{Bounding the constant \boldmath{$q_0$}.}
Lastly, let's show that $Q(t)$ satisfies (3).
Since $c_0 = 0$, By Lemma~\ref{lem:q_coefficients} we have $q_0 = - C \pnl_0 = C |\pnl_0|$. 
We lower bound $C$ and $\pnl_0$ in terms of $d$:
\begin{itemize}
    \item \textit{Bounding $|\pnl_k|$:} 
    Since $|P_k(t)| \le 1$ universally on $[-1, 1]$ by the normalization, we have 
    \[ 
    |\pnl_k| \le \max_{t \in [-1, 1]} |\Pnl(t)| \le 2(2)^{10d} = 2^{O(d)}. 
    \]
    
    \item \textit{Bounding\, $c_k$:} 
    Expanding $t^k$ in the Gegenbauer basis yields $1\cdot t^k = \beta\cdot P_k(t) + [\text{lower degree terms}]$.
    Chebyshev's extremal polynomial theorem~\cite{szego-orthogonal, huang1992, rivlin} says that any degree-$k$ polynomial bounded by $1$ on $[-1, 1]$ has a leading coefficient at most $2^{k-1}$.  If we apply Chebyshev's theorem on $P_k(t)$ which is bounded by $1$ on $[-1, 1]$ and compare the coefficients, we get $\beta \ge 2^{1-k}$.
    Now $c_k$, the $k$-th Gegenbauer coefficient of $\arcsin(t)$, is just $\beta$ times the $k$-th Taylor coefficient of $\arcsin(t)$, which is at least $\Theta(k^{-3/2})$.
    (see Eq.~\ref{eq:arcsin-series}).
    This gives $c_k \ge \Theta(2^{-k+1}\cdot k^{-3/2})$. 
    For $k \le 10d+1$, $c_k \ge 2^{-O(d)}$. 
    
    \item Therefore, $\displaystyle C = \min_{\substack{1 \le k \le 10d+1 \\ k \text{ odd}}} (c_k / |\pnl_k|) \ge 2^{-O(d)} / 2^{O(d)} = 2^{-O(d)}$.
    
    \item \textit{Bounding $|\pnl_0|$:}  By Lemma~\ref{lem:q_coefficients}, $|\pnl_0| \ge C_0^d$ for some $C_0 > 1$. 
\end{itemize}
Combining these results, $q_0 = C |\pnl_0| \ge 2^{-O(d)} \cdot C_0^d \ge 2^{-O(d)}$.  This established property (3) for~$Q(t)$.


\subsubsection{Analyzing the Gegenbauer coefficients \boldmath{$\delta_k$}: Proof of Lemma~\ref{lem:q_coefficients}}
\label{sec:g-coefficients}

Recall $\pnl(t) = (t+0.9)(1-t)^{10d} = 1.9(1-t)^{10d} - (1-t)^{10d+1}$.
Define
\[
\EMPH{$I_k(A)$} \coloneqq \int_{-1}^1 (1-t)^A P_k(t) \, d\sigma(t).
\]
We can split the integration by linearity:
\[
    \pnl_k = 1.9 I_k(10d) - I_k(10d+1).
\]
Immediately we see the ratio $\frac{I_k(10d+1)}{I_k(10d)} > 1.9$ if and only if $\pnl_k < 0$.
Our goal is to show that for every $k \le A$, (1) $(-1)^{k}I_k(A) > 0$, and (2) $\frac{|I_k(A+1)|}{|I_k(A)|} \ge 1+ \frac{A}{A+d-1}$.

\paragraph*{Exact evaluation of \boldmath{$I_k(A)$}.}
Using Rodrigues' formula for Gegenbauer polynomials, 
\[
P_k(t)=
\frac{(-1)^k}{2^k (w+1)^{\overline{k}}}
(1-t^2)^{-w} \cdot
\frac{d^k}{dt^k}\Bigl((1-t^2)^{k+w}\Bigr),
\]
where $\EMPH{$z^{\overline{k}}$} \coloneqq z\cdot\ldots\cdot (z+k-1)$ is the \emph{$k$-th rising power} of $z$, and $w \coloneqq \frac{d-3}{2}$.
The weight function for $S^{d-1}$ is proportional to $(1-t^2)^w$. 
Plug in $I_k(A)$ by multiplying $P_k(t)$ with \((1-t)^A d\sigma(t) = c_d(1-t)^{A}(1-t^2)^w dt\) where $c_{d} = \frac{\Gamma(d/2)}{\sqrt{\pi}\,\Gamma((d-1)/2)} > 0$ and $\Gamma(z)$ is the Gamma function where $\Gamma(n+1) = n!$ for every integer $n$. 
The terms $(1-t^2)^{-w}$ and $(1-t^2)^w$ cancel out:
\[
I_k(A)=
c_d\cdot \frac{(-1)^k}{2^k (w+1)^{\overline{k}}}
\int_{-1}^{1}(1-t)^A \cdot
\frac{d^k}{dt^k}\Bigl((1-t^2)^{k+w}\Bigr) dt.
\]
Integrating by parts \(k\) times gives
\[
I_k(A)=c_d\cdot 
\frac{1}{2^k (w+1)^{\overline{k}}}
\int_{-1}^{1}\frac{d^k}{dt^k}\Paren{\Big. (1-t)^A}
(1-t^2)^{k+w} dt.
\]
The boundary terms vanish because $(1-t^2)^{k+w}$ has a zero of order $k+w$ at both $t = \pm 1$. (This requires $w \ge 0$ and thus $d \ge 3$.)



\medskip
\noindent Two vital properties emerge:
\begin{enumerate}[label=\arabic*.]
    \item For $0\le k\le A$,
    since 
    \[
    \frac{d^k}{ dt^k}(1-t)^A
    =
    (-1)^k (A-k+1)^{\overline{k}} (1-t)^{A-k}
    \]
    remains positive,
    so the residual integral is also strictly positive.
    Plug in the $k$-th derivative of $(1-t)^A$ into the formula of $I_k(A)$, we have 
    \[
    I_k(A)=c_d\cdot 
    \frac{(-1)^k (A-k+1)^{\overline{k}}}{2^k (w+1)^{\overline{k}}}
    \int_{-1}^{1} (1-t)^{A+w}(1+t)^{k+w} dt.
    \]
    The sign of $I_k(A)$ is $(-1)^k$. 
    Therefore, $I_k(A) > 0$ for even $k$, and $I_k(A) < 0$ for odd $k$ for every $k\le A$.
    If $k>A$, \(\frac{d^k}{dt^k}(1-t)^A=0\), hence \(I_k(A)=0\).
    
    \item 

    The ratio of $I_k(A+1)$ and $I_k(A)$ has a nice closed form:  Again assume $0\le k\le A$.
    Observe that the integral in $I_k(A)$ can be written as $2^{A+k+2w}$ times a Beta function
    \(
    B(A+w+1, k+w+1) 
    \)
    where $B(x,y) = \frac{\Gamma(x)\Gamma(y)}{\Gamma(x+y)} = \int_0^1 u^{x-1}(1-u)^{y-1} \,du$, by setting $u=\frac{1+t}{2}$.
    The Beta function satisfies a nice ratio identity $B(x+1,y)/B(x,y) = x/(x+y)$:
    \[
    \frac{B(A+w+2, k+w+1)}{B(A+w+1, k+w+1)}
    =
    \frac{A+w+1}{A+k+2w+2}.
    \]

    Therefor we obtain, for \(0\le k\le A\),
    \begin{align}
    \label{eq:Ik-ratio}
        \frac{|I_k(A+1)|}{|I_k(A)|} = 2 \left( \frac{A+1}{A+1-k} \right) \left( \frac{A+w+1}{A+k+2w+2} \right).
    \end{align}
    For $k \ge 0$, the denominator function $f(k) = (A+1-k)(A+k+d-1)$ 
    is strictly decreasing. 
    Therefore, the ratio is strictly increasing with $k$, achieving its global minimum on $[0, A]$ at $k=0$.
    This implies
    \[
        \frac{|I_k(A+1)|}{|I_k(A)|} \ge 2 \left( \frac{A+w+1}{A+2w+2} \right)
        \ge 1+\frac{A}{A+2w+2}.
    \]
\end{enumerate}

\paragraph*{Sign Alternation of \boldmath{$\pnl_k$}.}
We substitute $A = 10d$ into the ratio:
\[
    \min_{0 \le k \le A} \frac{|I_k(10d+1)|}{|I_k(10d)|} \ge 1+ \frac{10d}{11d-1} = \frac{21d - 1}{11d - 1}.
\]
For any dimension $d \ge 2$, this ratio evaluates to at least $\frac{41}{21} \approx 1.952 > 1.9$.

\medskip
We can now cleanly determine the signs of $\pnl_k = 1.9 I_k(10d) - I_k(10d+1)$ for all $k \le 10d+1$:
\begin{itemize}
    \item \textit{For even $k$:} Both $I_k(10d)$ and $I_k(10d+1)$ are strictly positive. Since the ratio of their absolute values $\ge 1.952$, we have $I_k(10d+1) \ge 1.952 I_k(10d) > 1.9 I_k(10d)$, forcing \textbf{$\pnl_k < 0$}.
    \item \textit{For odd $k$:} Both $I_k(10d)$ and $I_k(10d+1)$ are strictly negative. Because the ratio of their absolute values $\ge 1.952$, the magnitude of the positive term $-I_k(10d+1)$ thoroughly dominates, forcing \textbf{$\pnl_k > 0$}.
    \item \textit{When $k = 10d+1$:} $I_k(10d) = 0$ and thus the parity of $\pnl_k$ is solely determined by $I_k(10d+1)$. 
    \textit{(Note: For $k > 10d+1$, the maximum degree of $\Pnl(t)$ is exceeded, dictating $\pnl_k = 0$ exactly).}
\end{itemize}

\paragraph*{Lower bounds for \boldmath{$\pnl_0$}.}
From the lower bound for the ratio $\frac{|I_k(10d+1)|}{|I_k(10d)|}$ and the exact formula of $|I_k(10d)|$ using Beta function, by substituting $A=10d$ we get 
\begin{align*}
    |\pnl_0| = |1.9 I_0(10d) - I_0(10d)|
    &\ge \Paren{\frac{21d - 1}{11d - 1} - \frac{19}{10}} \cdot I_k(10d) \\
    &= \Paren{\frac{21d - 1}{11d - 1} - \frac{19}{10}} \cdot 2^{10d+w} \cdot B(10d+w+1, w+1) \\
    &\ge \frac{d+9}{10(11d - 1)} \Paren{\frac{3}{2}}^{9d+1} \\
    &\ge \frac{1}{10} \Paren{\frac{3}{2}}^{9d+1} 
\end{align*}
which shows $|\pnl_0| \ge C_0^d$ for some $C_0 > 1$.

\medskip
This concludes the proof of Lemma~\ref{lem:q_coefficients}.

\subsection{Upper Bound on the Variance}
The $2^{-O(d)} n^2$ lower bound on $\E[X]$ is asymptotically tight, as shown in the following proposition. 

\begin{proposition}
There exist infinitely many integers $d$ for which the following holds.
Let
\[
n \coloneqq \Floor{e^{d/100}}.
\]
Then there exist unit vectors $v_1,\dots,v_n \in \mathbb R^d$ such that $\langle v_i, v_j\rangle \ge -0.9$ for all $i\neq j$, so that if $g\sim \mathcal N(0,I_d)$ and $
X_i \coloneqq \operatorname{sgn}(\langle g, v_i\rangle)\in\{\pm 1\}$ and 
$X \coloneqq \sum_{i=1}^n X_i$, 
then $\mathbb E_g[X^2] \le 2n$.
In particular,
\[
\frac{\mathbb E_g[X^2]}{n^2} \le \frac{2}{n} \le 2e^{-d/100} = e^{-\Omega(d)}.
\]
\end{proposition}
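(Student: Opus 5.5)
We will construct the vectors by a random or coding-theoretic construction: nearly orthogonal unit vectors. By Sheppard's formula (Lemma~\ref{lem:sheppard}), $\E[X^2] = \frac{2}{\pi}\sum_{i,j}\arcsin(\langle v_i,v_j\rangle) = n + \frac{2}{\pi}\sum_{i\ne j}\arcsin(t_{ij})$. So it suffices to find $n=\lfloor e^{d/100}\rfloor$ unit vectors with $|t_{ij}|\le \eta$ for all $i\neq j$, where $\eta$ is small enough that $\frac{2}{\pi}\arcsin(\eta)\,n^2 \le n$, i.e.\ roughly $\eta \le 1/n$. Note that $|t_{ij}|\le \eta<0.9$ automatically gives the feasibility condition $t_{ij}\ge -0.9$.

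Requiring $|t_{ij}|\le 1/n = e^{-d/100}$ for exponentially many vectors is impossible by a Welch/rank bound, since near-orthogonality to within $\eta$ permits only about $e^{O(\eta^2 d)}$ vectors. The plan is therefore to use the signs rather than magnitudes: pick vectors whose pairwise inner products sum to something tiny, so that the sign contributions cancel. The clean construction is antipodal-free but balanced. Take $v_1,\dots,v_m$ nearly orthogonal random vectors (with $|t_{ij}|\le 0.1$, possible for $m = e^{cd}$ by a union bound over Gaussian concentration), and then add, for each $v_i$, a companion $u_i$ chosen so that $\mathrm{sgn}\langle g,v_i\rangle + \mathrm{sgn}\langle g,u_i\rangle$ is usually zero. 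Exact antipodes violate $-0.9$, so instead we use a different cancellation. Choose $n$ to be even, take $n/2$ vectors $w_i$ with $|\langle w_i,w_j\rangle|\le 0.01$, and set $v_{2i-1},v_{2i}$ to be $\pm$-combinations in extra coordinates.

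The simplest route I would actually commit to is the following. Since $\arcsin$ is odd and increasing, I want $\sum_{i\ne j}\arcsin(t_{ij})\le \frac{\pi}{2}n$. I would take $v_i$ i.i.d.\ uniform on $\mathbb{S}^{d-1}$. Then $\E\sum_{i\ne j}\arcsin(t_{ij})=0$ by symmetry, and $\E[X^2]$ averaged over the draw equals exactly $n$, because for fixed $g$ the $X_i$ are independent signs with mean $0$. Hence some realization has $\E_g[X^2]\le n$. For the constraint, a union bound gives $\Pr[t_{ij}<-0.9]\le e^{-cd}$ with $c$ of order $0.9^2/2$, and summing over $n^2=e^{d/50}$ pairs gives total failure probability $o(1)$. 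Thus we need a realization that is both feasible and has small $\E_g[X^2]$. Write $F=\E_g[X^2]\ge 0$ with $\E F = n$. Markov gives $\Pr[F>2n]\le 1/2$, and the bad event has probability $o(1)$, so a good realization exists for all large $d$. This yields the ``infinitely many $d$.''

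The main obstacle is the concentration constant: we must verify that $\Pr[\langle v_i,v_j\rangle<-0.9]\le (1-0.81)^{(d-1)/2}\approx e^{-0.83 d}$, via the spherical-cap bound $\Pr[t\le -a]\le (1-a^2)^{(d-1)/2}$. This is far smaller than $e^{-d/50}$, so the union bound closes. A subtle point is that Markov must be applied to a nonnegative variable, and $F$ is nonnegative, so the argument is fine.
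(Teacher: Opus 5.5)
Your committed argument is correct and is essentially the paper's proof: i.i.d.\ uniform vectors on $\mathbb{S}^{d-1}$; $\mathbb{E}_v\bigl[\mathbb{E}_g[X^2]\bigr]=n$ because, for fixed $g$, the signs $X_i$ are i.i.d.\ and symmetric; Markov's inequality gives $\mathbb{E}_g[X^2]\le 2n$ with probability at least $1/2$; and the cap bound $(1-a^2)^{(d-1)/2}$ with a union bound over at most $e^{d/50}$ pairs gives feasibility with probability $1-o(1)$. The only differences are that the paper proves $\mathbb{E}_{v,g}[X_iX_j]=0$ via the reflection $v_i\mapsto -v_i$ and proves the cap bound by a Gaussian/$\chi^2$ computation, whereas you cite it; the exploratory constructions in your first two paragraphs (near-orthogonal families, companion vectors) are dead ends you can simply drop.
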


\begin{proof}
We use the probabilistic method. Sample $v_1,\dots,v_n$ i.i.d.\ uniformly at random from the unit sphere $\mathbb{S}^{d-1}$.

\medskip
\noindent\textbf{1. Ensuring $\langle v _i, v_j \rangle \geq -0.9$.}
Fix $i\neq j$. By symmetry,
\[
\mathbb P(\langle v_i,v_j\rangle < -0.9)=\mathbb P(\langle v_i,v_j\rangle > 0.9).
\]
We use the following standard claim that shows a tail bound on the inner product of two random $d$-dimensional unit vectors. For completeness, we provide a proof at the of this subsection.

\begin{claim}
For independent uniform $U,V\in \mathbb{S}^{d-1}$ and any $a\in(0,1)$,
\begin{equation}\label{eq:cap}
\mathbb P(\langle U,V\rangle \ge a) \le (1-a^2)^{\frac{d-1}{2}}.
\end{equation}
\label{claim:unit-vector}
\end{claim}

Using the above claim with $a=0.9$ we get
\[
\mathbb P(\langle v_i,v_j\rangle < -0.9) \le 0.19^{\frac{d-1}{2}}.
\]    
Let $A$ be the event that $\langle v_i,v_j\rangle \ge -0.9$ for all $i\neq j$.
By the union bound over $\binom{n}{2}$ pairs,
\[
\mathbb P(A^c)
\le \binom{n}{2}\, 0.19^{\frac{d-1}{2}}
\le \frac{n^2}{2}\, 0.19^{\frac{d-1}{2}}.
\]
Since $n^2\le e^{d/50}$ and $0.19^{(d-1)/2}=\exp\!\big(-\tfrac{d-1}{2}\ln(1/0.19)\big)$,
there exists an absolute constant $c>0$ such that
\[
\mathbb P(A^c)\le e^{-cd}
\]
for all sufficiently large $d$. In particular, for all sufficiently large $d$,
\begin{equation}\label{eq:PA}
\mathbb P(A)\ge 0.99.
\end{equation}

\medskip
\noindent\textbf{2. Upper bounding $\E[X^2]$.}
For a given realization of $(v_1,\dots,v_n)$ define the nonnegative quantity
\[
F(v_1,\dots,v_n) \coloneqq \mathbb E_g[X^2].
\]
We compute its expectation over the random choice of $v_1,\dots,v_n$:
\[
\mathbb E_v[F] = \mathbb E_{v,g}[X^2].
\]
Expand
\[
X^2=\sum_{i=1}^n X_i^2+\sum_{i\neq j} X_iX_j.
\]
Since $X_i\in\{\pm 1\}$, we have $X_i^2=1$ always, hence
\[
\mathbb E_{v,g}\Big[\sum_{i=1}^n X_i^2\Big]=n.
\]
We claim that for each $i\neq j$,
\[
\mathbb E_{v,g}[X_iX_j]=0.
\]
Indeed, condition on $g$ and on all vectors $\{v_k\}_{k\neq i}$; under this conditioning, $v_i$ remains uniform on the sphere and is invariant under $v_i\mapsto -v_i$. But replacing $v_i$ by $-v_i$ flips the sign of $X_i=\operatorname{sgn}(\langle g,v_i\rangle)$ while leaving $X_j$ unchanged, hence the product $X_iX_j$ changes sign. Therefore its conditional expectation is $0$, and taking expectations yields $\mathbb E_{v,g}[X_iX_j]=0$.

Consequently,
\[
\mathbb E_v[F]=\mathbb E_{v,g}[X^2]=n.
\]
By Markov's inequality applied to the nonnegative random variable $F$,
\[
\mathbb P_v(F\ge 2n)\le \frac{\mathbb E_v[F]}{2n}=\frac{1}{2},
\]
so if we define $B\coloneqq\{F\le 2n\}$ then
\begin{equation}\label{eq:PB}
\mathbb P(B)\ge \frac{1}{2}.
\end{equation}

\medskip
\noindent\textbf{Step 3: positive probability intersection.}
From \eqref{eq:PA} and \eqref{eq:PB},
\[
\mathbb P(A\cap B)\ge \mathbb P(A)+\mathbb P(B)-1 \ge 0.99+\frac12-1>0.
\]
Hence there exists a deterministic choice of unit vectors $(v_1,\dots,v_n)$ such that both $A$ and $B$ hold, i.e.,
\[
\langle v_i,v_j\rangle \ge -0.9 \quad \forall i\neq j,
\qquad\text{and}\qquad
\mathbb E_g[X^2]\le 2n.
\]
Finally, for this choice,
\[
\frac{\mathbb E_g[X^2]}{n^2}\le \frac{2n}{n^2} = \frac{2}{n} \le 2e^{-d/100} = e^{-\Omega(d)}.
\]
\end{proof}

Now we prove Claim~\ref{claim:unit-vector} on the tail bound on the inner product between two independent $d$-dimensional unit vectors. 
\begin{proof}
[of Claim~\ref{claim:unit-vector}]
Fix $U$ and, by rotational invariance, assume $U=e_1$. Let $V$ be uniform on $\mathbb{S}^{d-1}$. A standard representation is
\[
V=\frac{G}{\|G\|},\qquad\text{where } G\sim \mathcal N(0,I_d).
\]
Write $G=(Z,W)$ where $Z\coloneqq G_1\sim \mathcal N(0,1)$ and $W\coloneqq (G_2,\dots,G_d)$, and let $S\coloneqq \|W\|^2=\sum_{k=2}^d G_k^2\sim\chi^2_{d-1}$, independent of $Z$.
Then
\[
\langle U,V\rangle = V_1 = \frac{Z}{\sqrt{Z^2+S}}.
\]
If $V_1\ge a$ (with $a\in(0,1)$), then $Z\ge 0$ and squaring gives
\[
\frac{Z^2}{Z^2+S}\ge a^2
\quad\Longrightarrow\quad
(1-a^2)Z^2 \ge a^2 S
\quad\Longrightarrow\quad
Z^2 \ge \lambda S,
\qquad \lambda\coloneqq \frac{a^2}{1-a^2}.
\]
Conditioning on $S$, we have
\[
\mathbb P(V_1\ge a\mid S)\le \mathbb P(Z^2\ge \lambda S\mid S)
= \mathbb P(|Z|\ge \sqrt{\lambda S}\mid S).
\]
Using the standard Gaussian tail bound $\mathbb P(|Z|\ge t)\le e^{-t^2/2}$ for $t\ge 0$,
\[
\mathbb P(V_1\ge a\mid S)\le \exp\!\left(-\frac{\lambda S}{2}\right).
\]
Taking expectation over $S$ yields
\[
\mathbb P(V_1\ge a) \le \mathbb E\left[\exp\!\left(-\frac{\lambda S}{2}\right)\right].
\]
Now $S=\sum_{k=2}^d G_k^2$ is a sum of $(d-1)$ independent $\chi^2_1$ variables; for $Y\sim\chi^2_1$ one has
$\mathbb E[e^{-tY}]=(1+2t)^{-1/2}$ for $t\ge 0$ (a one-dimensional Gaussian integral),
so
\[
\mathbb E\left[e^{-\lambda S/2}\right]
=\left(\mathbb E\left[e^{-(\lambda/2)Y}\right]\right)^{d-1}
=(1+\lambda)^{-(d-1)/2}.
\]
Since $1+\lambda = 1+\frac{a^2}{1-a^2}=\frac{1}{1-a^2}$, we conclude
\[
\mathbb P(V_1\ge a) \le (1-a^2)^{\frac{d-1}{2}}.
\]
This is exactly \eqref{eq:cap}, completing the proof.
\end{proof}

%% file: algorithms.tex

\label{sec:alg}

In this section we show how Theorem~\ref{thm:geom} yields an improved approximation for Max-Cut. The algorithm and its analysis are almost identical to the local improvement-based framework from \cite{FKL02,HK23} -- the key difference is that we rely on our geometric anti-concentration lemma to show the additional improvement on top of the $\alpha_{\mathrm{GW}}$-factor approximation guarantee for Max-Cut. We describe and analyze the algorithm in detail in this section.

\subsection{The SDP Relaxation and Rounding Scheme}

As discussed earlier, our first step would be to solve the basic SDP relaxation strengthened with the $\ell^2_2$-triangle inequalities:

\begin{figure}[ht!]
\begin{align}				\label{eq:sdp-basic}	
	\mathrm{SDP} := 
	& \max\ \frac12\sum_{(i,j)\in E} w_{ij}\bigl(1 - \langle v_i,v_j\rangle\bigr) & \\
	& \qquad \text{s.t. } \|v_i\|^2_2=1 & \forall i\in V,  \\
	& \|a_iv_i - a_jv_j \|^2_2 + \|a_iv_i - a_kv_k \|^2_2 \geq \|a_jv_j - a_kv_k\|^2_2  & \forall a_i,a_j,a_k \in \{\pm 1\}, \nonumber \\
	& & \forall i,j,k \in V. \label{eqn:triangle}			
\end{align}
\end{figure}

Clearly, the SDP relaxation is feasible, since any integral assignment $\{v_i \in \{\pm 1\}\}_{i \in V}$ satisfies all the constraints. Hereafter, we will work with a feasible SDP solution $\{v_i\}_{i \in V}$ where the vectors lie in $\mathbb{R}^d$ (or equivalently, the corresponding Gram matrix has rank at most $d$), and use $\mathrm{SDP}$ to denote the value of the corresponding solution.

\paragraph{Rounding the SDP solution.} Next, we use the rounding scheme from \cite{FKL02,HK23} to round the above SDP solution into an integral solution. The rounding algorithm, Algorithm \ref{alg:local}, has two parts. Firstly, it samples a Gaussian vector \(g\sim\mathcal{N}(0,I_d)\) and projects it along the vectors $v_1,\ldots,v_n$ and outputs the corresponding signs $x_i = \mathrm{sgn}(\langle g, v_i \rangle)$ as a preliminary set of labels. This is then followed by a greedy post-processing step where some of labels of the vertices are flipped to further improve the quality of the solution. For completeness, we describe the rounding scheme in full detail below: 

\begin{algorithm}[ht!]
\caption{GW rounding with local improvement}
\label{alg:local}
\begin{enumerate}
  \item Input: a \(d\)-dimensional SDP solution \(\{v_i\}_{i\in V}\subseteq\mathbb{R}^d\).
  \item Set \(\varepsilon := 2^{-C d}\) for a sufficiently large absolute constant \(C>0\).
  \item Sample \(g\sim\mathcal{N}(0,I_d)\) and set \(x_i \leftarrow \mathrm{sgn}(\langle g,v_i\rangle)\) for all \(i\in V\).
  \item Candidate set: \(S \leftarrow \{ i\in V : |\langle g,v_i\rangle| < \varepsilon\}\).
  \item For each \(i\in S\):
   \begin{enumerate}
    \item Define \(W_i := \sum_{j\in N(i)} w_{ij}\).
    \item Define \(B_i := \{ j\in N(i) : x_ix_j = 1\ \text{and}\ |\langle g,v_j\rangle|\ge \varepsilon\}\).
    \item If \(\sum_{j\in B_i} w_{ij} > W_i/2\), then flip \(x_i\leftarrow -x_i\).
  \end{enumerate}
  \item Output \(x\).
\end{enumerate}
\end{algorithm}

Intuitively, one can think of $S$ as the subset of vertices that are close to being undecided with respect to the fixed hyperplane, and in particular, for any vertex $i \in V$, the joint distribution over the labels of the neighborhood of $i$ remains almost unchanged even after conditioning on the label of $i$; this observation is in crucial in arguing that for the vertices in $S$, the expected gain in the value of the solution after post-processing is at least $e^{-O(d)}$ (Lemma \ref{lem:local-gain}). 

Now, we proceed to analyze the performance of the rounding algorithm. We begin with the following standard fact.

\begin{claim}	\label{cl:edge-wise}
	For any edge $(i,j) \in E$, we have
	\[
		\Pr_{g \sim \mathcal{N}(0,I_d)}\left[x_i \neq x_j\right] \geq \alpha_{GW}\cdot \frac{\|v_i - v_j\|^2_2}{4}.
	\]
    Hence, the expected number edges cut by the labeling $(x_i)^n_{i = 1}$ from Line 3 is at least $\alpha_{\mathrm{GW}}\cdot \mathrm{SDP}$.
\end{claim}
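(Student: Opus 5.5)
The plan is to reduce the claim to the two-dimensional computation behind Goemans--Williamson. The projection of \(g\sim\mathcal{N}(0,I_d)\) onto \(\mathrm{span}(v_i,v_j)\) is a standard Gaussian in that plane. Hence the pair \((\langle g,v_i\rangle,\langle g,v_j\rangle)\) is a centered bivariate normal with unit variances and correlation \(\rho:=\langle v_i,v_j\rangle\).

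The event \(x_i\neq x_j\) has probability
\[
\Pr[x_i\neq x_j]=\frac{1-\E[x_ix_j]}{2}=\frac12-\frac1\pi\arcsin(\rho)=\frac{\arccos(\rho)}{\pi}.
\]
This follows from Sheppard's formula (Lemma~\ref{lem:sheppard}) together with \(\arcsin(\rho)+\arccos(\rho)=\pi/2\). Equivalently, the separating hyperplane's normal direction in the plane is uniform, so the probability is the angle divided by \(\pi\).

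Next, since \(\|v_i\|=\|v_j\|=1\), we have \(\|v_i-v_j\|_2^2/4=(1-\rho)/2\). The definition of \(\alpha_{\mathrm{GW}}\) as the minimum over \(\rho\in[-1,1]\) of \(\frac{\arccos(\rho)/\pi}{(1-\rho)/2}\) then gives the bound directly.

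The case \(\rho=1\) needs a separate check, because there the ratio is \(0/0\). In that case \(v_i=v_j\), both sides are \(0\), and the inequality holds trivially. Equivalently, one may read the definition of \(\alpha_{\mathrm{GW}}\) as ranging over \([-1,1)\).

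For the second sentence of the claim, sum over edges with weights and use linearity of expectation:
\[
\E\Big[\sum_{(i,j)\in E} w_{ij}\mathbf{1}[x_i\neq x_j]\Big]\ \ge\ \alpha_{\mathrm{GW}}\sum_{(i,j)\in E}w_{ij}\,\frac{1-\langle v_i,v_j\rangle}{2}=\alpha_{\mathrm{GW}}\cdot\mathrm{SDP}.
\]
Ties \(\langle g,v_i\rangle=0\) occur with probability zero and can be ignored.

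I expect no real obstacle here. The only point needing care is the reduction from \(d\) dimensions to the two-dimensional plane, which is handled by rotational invariance of the Gaussian.
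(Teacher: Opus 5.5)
Your proposal is correct and is essentially the standard Goemans--Williamson edge-wise analysis the paper relies on. The paper labels the claim a standard fact and uses the computation $\Pr[x_i\neq x_j]=\arccos(\langle v_i,v_j\rangle)/\pi$ from the introduction, the identity $\|v_i-v_j\|_2^2/4=(1-\langle v_i,v_j\rangle)/2$, and the definition of $\alpha_{\mathrm{GW}}$, with your handling of $\rho=1$ and the linearity-of-expectation summation supplying details it leaves implicit.
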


Next, we can easily verify the following:

\begin{claim}
	For any $i \in V$, the \emph{local gain} at flipping $x_i \gets -x_i$ is 
	\begin{equation}
		\label{eq:gain-def}
		\Delta_i := \Bigl( \sum_{j\in B_i} w_{ij} - \sum_{j\in N(i)\setminus B_i} w_{ij} \Bigr)_+
		= \bigl(2\sum_{j\in B_i} w_{ij} - W_i\bigr)_+.
	\end{equation}
	Hence, the total increase in the objective due to the post-processing step is at least
	\(\Delta:=\sum_{i\in S}\Delta_i\).
\end{claim}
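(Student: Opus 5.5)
We verify the claim directly from the definitions in Algorithm~\ref{alg:local}, in two parts: first the closed form of $\Delta_i$, then the additivity of the gains over $S$.

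\textbf{The formula for $\Delta_i$.} Fix $i\in S$ and look at the edges $(i,j)$ with $j\in N(i)$. Flipping $x_i$ changes whether each such edge is cut and leaves every other edge unchanged. An edge with $x_ix_j=1$ goes from uncut to cut and gains $w_{ij}$. An edge with $x_ix_j=-1$ goes from cut to uncut and loses $w_{ij}$. The set $B_i$ contains only the neighbors with $x_ix_j=1$ that are non-candidates. The algorithm conservatively credits gain only from $B_i$ and charges every other neighbor (candidates, and non-candidates on the opposite side) as a potential loss. The credited net change is therefore $\sum_{j\in B_i}w_{ij}-\sum_{j\in N(i)\setminus B_i}w_{ij}$. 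Since $\sum_{j\in N(i)\setminus B_i}w_{ij}=W_i-\sum_{j\in B_i}w_{ij}$, this equals $2\sum_{j\in B_i}w_{ij}-W_i$. The flip is executed exactly when this quantity is positive, which is the test in Line~5(c), so the realized credited gain is its positive part. This gives \eqref{eq:gain-def}.

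\textbf{Additivity over $S$.} We need the flips to be order-independent, and we need each actual gain to be at least the credited $\Delta_i$. The decision for $i$ depends only on $x_j$ for non-candidates $j$, together with the original $x_i$. Non-candidates are never flipped, so every decision is determined by the initial labeling, and the order of processing is irrelevant. For the actual gain of the flipped set $F\subseteq S$, split the cut edges incident to $F$ into two kinds.

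Take an edge $(i,j)$ with $i\in F$ and $j\notin S$. Its status changes exactly as computed above from the initial labels, so its contribution is $+w_{ij}$ if $j\in B_i$ and $-w_{ij}$ otherwise.

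Take an edge $(i,j)$ with $i\in F$ and $j\in S$. Its status after all flips is arbitrary, but the change is at least $-w_{ij}$. In $\Delta_i$ we already charged $-w_{ij}$ for it, because $j\in N(i)\setminus B_i$. If both endpoints lie in $F$, the edge's status does not change at all (both labels flip), and we have charged it twice. In every case the true change is at least the total charge.

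Summing over $i\in F$, the actual increase in the cut is at least $\sum_{i\in F}\bigl(2\sum_{j\in B_i}w_{ij}-W_i\bigr)=\sum_{i\in S}\Delta_i=\Delta$. The equality holds because the terms for $i\in S\setminus F$ are $0$ by the positive part.

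The only subtle point is the double counting of candidate--candidate edges. The argument handles it because every such edge is charged fully as a loss, and its true change is bounded below by $-w_{ij}$ per incident flipped endpoint.
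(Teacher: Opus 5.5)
Your proof is correct and follows the same argument the paper relies on: the paper only says the claim is easily verified, and its overview lists your ingredients. These are that each decision depends only on the original $x_i$ and on the never-flipped non-candidates, so the order of flips is irrelevant, and that charging every candidate neighbor as a full loss makes $\sum_{i\in S}\Delta_i$ a valid lower bound on the actual increase. Your explicit treatment of edges with both endpoints flipped, which are charged twice but do not change, is a detail the paper leaves implicit.
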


In the following section, we will bound the expected total gain $\Delta_i$ at a given vertex $i$.

\subsection{Expected local gain}

Next, we give the following lemma (which is an analogue of Lemma 8 from \cite{HK23}) which lower bounds the expected gain $\Delta_i$ during the post-processing phase. 

\begin{lemma}[Expected local gain]
\label{lem:local-gain}
Suppose $\rho_{ij} := \langle v_i,v_j\rangle\in[\rho^*\pm 0.01]$ for every edge $(i,j) \in E$.
Then, for every vertex \(i\in V\),
\[
  \E[\Delta_i\mid i\in S] \ \ge\ \ W_i\cdot 2^{-O(d)}.
\]
\end{lemma}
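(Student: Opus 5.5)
Fix $i$ and condition on $i\in S$, i.e.\ on $|\langle g,v_i\rangle|<\varepsilon$; by symmetry assume $x_i=+1$. Decompose $g = \gamma v_i + g'$ with $\gamma=\langle g,v_i\rangle$ and $g'$ a standard Gaussian in $v_i^\perp$, independent of $\gamma$. For each neighbor $j$, write $v_j=\rho_{ij}v_i+\sqrt{1-\rho_{ij}^2}\,u_j$ with $u_j\in v_i^\perp$ a unit vector. Then $\langle g,v_j\rangle=\rho_{ij}\gamma+\sqrt{1-\rho_{ij}^2}\langle g',u_j\rangle$. Since $|\gamma|<\varepsilon=2^{-Cd}$ and $\rho_{ij}\approx\rho^*$ is bounded away from $\pm1$ (as in the HK23 reduction, the relevant $\rho^*$ is near the GW-critical value $\approx -0.69$), the sign of $\langle g,v_j\rangle$ agrees with $Y_j:=\mathrm{sgn}(\langle g',u_j\rangle)$ except on the event $|\langle g',u_j\rangle|\lesssim\varepsilon$. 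The same holds for the event that $j$ is itself a candidate. Each of these bad events has probability $O(\varepsilon)$, so the total weight of affected neighbors has expectation $O(\varepsilon)W_i$. Writing $\sum_{j\in B_i}w_{ij}-(W_i-\sum_{j\in B_i}w_{ij})$, we get $2\sum_{j\in B_i}w_{ij}-W_i \ge \sum_j w_{ij}Y_j - E$ with $\E[E]=O(\varepsilon)W_i$. Since $(\cdot)_+$ is 1-Lipschitz, it suffices to show $\E[(\sum_j w_{ij}Y_j)_+]\ge W_i 2^{-O(d)}$, and then to choose $C$ large so that the error $O(\varepsilon)W_i$ is absorbed.

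The next step is to lower bound $\E[(Z)_+]$ with $Z:=\sum_j w_{ij}Y_j$, using symmetry and Theorem~\ref{thm:geom}. Since $g'\mapsto -g'$ flips all $Y_j$, $Z$ is symmetric, so $\E[Z_+]=\tfrac12\E|Z|\ge \tfrac12\E[Z^2]/\max|Z|\ge \E[Z^2]/(2W_i)$. We then apply Theorem~\ref{thm:geom} in its weighted form to the unit vectors $u_j\in v_i^\perp\cong\mathbb{R}^{d-1}$. To obtain the weighted form, either replicate vertices proportionally to rational weights, or rerun Proof~I-(i) with weights: the net argument gives a cell with weight $\ge W_i/M$, so $\sum_{j,k}w_{ij}w_{ik}\langle u_j,u_k\rangle^p\ge W_i^2 2^{-O(d)}$. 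Either route yields $\E[Z^2]\ge W_i^2 2^{-O(d)}$, hence $\E[Z_+]\ge W_i2^{-O(d)}$.

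The main obstacle is verifying the hypothesis $\langle u_j,u_k\rangle\ge -0.9+\Omega(1)$ for neighbors $j,k$ of $i$; this is where the triangle inequalities and the condition $\rho_{ij}\in[\rho^*\pm0.01]$ enter. We have $\langle u_j,u_k\rangle=(\langle v_j,v_k\rangle-\rho_{ij}\rho_{ik})/\sqrt{(1-\rho_{ij}^2)(1-\rho_{ik}^2)}$. The triangle inequality with signs $(a_i,a_j,a_k)=(1,-1,-1)$ gives $\langle v_j,v_k\rangle\ge -1-\rho_{ij}-\rho_{ik}$, i.e.\ roughly $\ge -1-2\rho^*$. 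Plugging in $\rho_{ij},\rho_{ik}\approx\rho^*\approx-0.69$ yields $\langle u_j,u_k\rangle\gtrsim(-1-2\rho^*-\rho^{*2})/(1-\rho^{*2}) = -(1+\rho^*)/(1-\rho^*)\approx -0.18$, comfortably above $-0.9$, with slack that absorbs the $\pm0.01$ perturbation. I would verify this computation carefully, and in particular whether $\rho^*$ is restricted to the negative range in which the inequality holds; for $\rho^*$ outside that range, the HK23 framework already handles the case (those edges are rounded with a ratio strictly better than $\alpha_{\mathrm{GW}}$). The $\pm0.01$ slack only perturbs the bound continuously, and the case $j=k$ contributes the diagonal term, which is harmless.
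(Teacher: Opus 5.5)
Your proposal is correct and follows essentially the paper's proof. It uses the same decomposition $v_j=\rho_{ij}v_i+\sqrt{1-\rho_{ij}^2}\,u_j$, the same triangle inequality with signs $(1,-1,-1)$ giving $\langle u_j,u_k\rangle\gtrsim -(1+\rho^*)/(1-\rho^*)\approx -0.18$ (the paper rounds this to $-0.2$), and the same appeal to the weighted form of Theorem~\ref{thm:geom} in dimension $d-1$. The only difference is bookkeeping: you use the exact signs $\mathrm{sgn}(\langle g',u_j\rangle)$, whose weighted sum is exactly symmetric, so that $\E[Z_+]=\tfrac12\E|Z|\ge \E[Z^2]/(2W_i)$, and you pay for the $\varepsilon$-thresholding once through the $1$-Lipschitz property of $(\cdot)_+$; the paper instead uses the shifted surrogate $\mathbf{1}[h_j\ge 3\varepsilon]$ and must track both its $-\Theta(\varepsilon)W_i$ mean drift and an $O(\varepsilon W_i^2)$ error in the second moment, so your route is slightly cleaner.
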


To prove the above lemma, we will need the following generalization of our geometric anti-concentration theorem (Theorem \ref{thm:geom}) to the weighted setting:

\begin{corollary}[Geometric theorem - Weighted Version]
	\label{cor:weighted}
	Let \(v_1,\dots,v_n\in\mathbb{R}^d\) be unit vectors satisfying
	\(\langle v_i,v_j\rangle\ge -0.9\) for all \(i\neq j\). Let $w_1,\ldots,w_n \in \mathbb{R}_{\geq 0}$ be a collection of non-negative weights. 	
	Sample \(g\sim\mathcal{N}(0,I_d)\) and define
	\(X_i :=\mathrm{sgn}(\langle g,v_i\rangle)\) and \(X :=\sum_{i=1}^n w_i X_i\). Then,
	\[
	\E[X^2] \ \ge\ W^2\cdot 2^{-O(d)},
	\]
	where $W := \sum^n_{i = 1} w_i$ denotes the sum of the weights. Again, the hidden constant in \(O(d)\) depends only on the threshold \(-0.9\).
\end{corollary}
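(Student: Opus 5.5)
The plan is to rerun Proof~I in weighted form rather than reduce to the unweighted Theorem~\ref{thm:geom} by duplicating vectors. Duplication would need rational weights and a limiting argument, and it would also create pairs of equal vectors. Those are harmless here, since $\langle v,v\rangle = 1 \ge -0.9$, but the direct route is cleaner. By Lemma~\ref{lem:sheppard} and the series~\eqref{eq:arcsin-series},
\[
\E[X^2] = \frac{2}{\pi}\sum_{i,j} w_iw_j \arcsin(t_{ij}) = \frac{2}{\pi}\sum_{k\ge 0} c_k\, S^w_{2k+1},
\qquad
S^w_p \coloneqq \sum_{i,j} w_iw_j t_{ij}^p = \Big\|\sum_i w_i v_i^{\otimes p}\Big\|_2^2 \ge 0 .
\]
Every term is nonnegative, so exchanging the sums is justified: the series converges absolutely because $\sum_k c_k < \infty$ and $|t_{ij}|\le 1$. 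It therefore suffices to find one odd $p=2k+1=\Theta(d)$ with $S^w_p \ge W^2 2^{-O(d)}$. Then~\eqref{eq:ck-lower} gives $c_k \ge \Omega(d^{-3/2}) = 2^{-O(d)}$, and the corollary follows.

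To find such a $p$, I would adapt the $\eps$-net argument of Proof~I-(i), replacing counting by weighted pigeonhole. Take the same $\eps=0.1$, $\tau=0.98$, and net $N$ with $|N|\le M=21^d$. Assign each $i$ to a nearest net point. Some net point $y$ then collects a set $T$ of indices with total weight $m_w \coloneqq \sum_{i\in T} w_i \ge W/M$, and all pairs in $T$ satisfy $t_{ij}\ge\tau$. Split the sum as in Proof~I-(i), using $t_{ij}^p \ge -0.9^p$ for odd $p$ on the pairs outside $T\times T$:
\[
S^w_p \ \ge\ m_w^2\tau^p - \Big(W^2 - m_w^2\Big)0.9^p \ \ge\ W^2\Big(\frac{\tau^p}{M^2} - 0.9^p\Big).
\]
The diagonal terms cause no trouble. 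They contribute $w_i^2\cdot 1 \ge 0$, and they are covered by the same bounds, since $t_{ii}=1$ exceeds both $\tau$ and $-0.9$. With the same choice $p = 2\lceil cd\rceil+1$ as in Proof~I-(i), this yields $S^w_p \ge W^2\, 0.9^p = W^2 2^{-O(d)}$.

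As a backup, Proof~I-(ii) also adapts. Apply Lemma~\ref{lem:psd-sum} to the matrix $A_{ij}=t_{ij}^p$ with the weighted quadratic form $w^\top A w$. Cauchy--Schwarz on the eigenvalues of $B = A+\delta J$ must then be replaced by $w^\top B w \ge (w^\top \mathrm{diag}(B) w)^2/\ldots$, which is less direct, so I would use the net argument.

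No step is a real obstacle. The only points to check are the weighted pigeonhole bound $m_w \ge W/M$ (immediate, since the $|N|$ bucket weights sum to $W$) and the fact that the constant in $O(d)$ depends only on the threshold $-0.9$ through $\tau/0.9>1$. Both carry over verbatim from the unweighted proof.
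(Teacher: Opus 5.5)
Your proof is correct, but it takes a different route from the paper. The paper uses Theorem~\ref{thm:geom} as a black box. It approximates the $w_i$ by rationals $w_i'$, chosen so that both $\mathbb{E}[X^2]$ and $W^2$ change by at most $2^{-Cd}W^2/4$. It then clears denominators to integers $n_i=Nw_i'$ and replaces each $v_i$ by $n_i$ identical copies, which is allowed since $\langle v_i,v_i\rangle=1\ge-0.9$. Finally it applies the unweighted theorem to this multiset and absorbs the approximation errors at the cost of a factor $2$. You instead reopen Proof~I-(i) and observe that both of its ingredients survive weighting. First, $S^w_p=\|\sum_i w_iv_i^{\otimes p}\|_2^2\ge0$ keeps every term of the $\arcsin$ series nonnegative. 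Second, the pigeonhole over the net still works because the bucket weights partition $W$. This removes the density/continuity step, handles real weights directly, and gives the explicit bound $\mathbb{E}[X^2]\ge\frac{2}{\pi}c_k\,0.9^{p}\,W^2$ with $p=2k+1$. What the paper's reduction buys is modularity: it transfers any proof of the unweighted statement without inspecting it. (Proof~II also adapts directly, since $\sum_{i,j}w_iw_jP_k(\rho_{ij})\ge0$ and the constant term becomes $q_0W^2$.) One correction to your backup remark: the inequality you sketch for Proof~I-(ii) has the wrong weighting. The right move is to run the trace/Frobenius argument of Lemma~\ref{lem:psd-sum} on $\mathrm{diag}(\sqrt{w})\,B\,\mathrm{diag}(\sqrt{w})$. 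This matrix is PSD of rank at most $D+1$ and has trace $(1+\delta)W$. Its squared Frobenius norm is $\sum_{i,j}w_iw_jB_{ij}^2\le(1+\delta)\sum_{i,j}w_iw_jB_{ij}$. This gives $\sum_{i,j}w_iw_jA_{ij}\ge W^2(1-\delta D)/(D+1)$, so Proof~I-(ii) adapts just as cleanly.
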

The weighted version stated above can be derived by a straightforward reduction to the unweighted case, and we give a proof of the corollary in Section \ref{sec:weighted}. Equipped with the above, we can now proceed to prove Lemma \ref{lem:local-gain}. 

\begin{proof}[Proof of Lemma \ref{lem:local-gain}]
We will proceed by fixing a vertex $i \in V$ and condition on \(i\in S\), i.e., \(|\langle g,v_i\rangle|<\varepsilon\).
By rotational invariance, we may assume \(v_i=e_1\) and write \(g=(g_1,g')\) with
\(g_1\sim\mathcal{N}(0,1)\) and \(g'\sim\mathcal{N}(0,I_{d-1})\) independent. For every $j \in N(i)$, let \(\rho_j:=\rho_{ij}=\langle v_i,v_j\rangle\in[\rho^*\pm 0.01]\). Then we can write
\[
 v_j = \rho_j e_1 + \sqrt{1-\rho_j^2}\,u_j,
\]
where $u_j \in \mathbb{R}^d$ is the normalized component of $v_j$ that is orthogonal to $e_1$. Observe that \(x_i=\mathrm{sgn}(g_1)\), and for each neighbor \(j\) let us define the Gaussian variable
\begin{equation}
\label{eq:hj-def}
  h_j := \mathrm{sgn}(g_1)\,\langle g',u_j\rangle.
\end{equation}
Then conditioned on \(g_1\), the vector \((h_j)_{j\in N(i)}\) is a Gaussian vector random variable in $\mathbb{R}^d$ with mean $0$ and covariance matrix
\(\Sigma\) given by
\begin{equation}
\label{eq:sigma-def}
  \Sigma_{jk} := \E[h_j h_k] = \langle u_j,u_k\rangle.
\end{equation}

\paragraph{Lower bounding correlations.}
We claim that for all distinct neighbors \(j,k\in N(i)\),
\begin{equation}
\label{eq:sigma-lb}
  \Sigma_{jk} \ge -0.2.
\end{equation}
The above can be argued as follows: note that the vectors $v_i,v_j,v_k$ satisfy \eqref{eqn:triangle} and hence,
\[
\|v_i + v_j\|^2_2 + \|v_i + v_k\|^2_2 \geq \|v_j - v_k\|^2_2,
\]
which on rearranging yields $\rho_{jk} \geq -\rho_{ij} - \rho_{ik} - 1$. On the other hand, by expanding and substituting the decomposition above yields
\[
  \rho_{jk} = \langle v_j,v_k \rangle =  \rho_j\rho_k + \sqrt{1-\rho_j^2}\,\sqrt{1-\rho_k^2}\,\langle u_j,u_k\rangle,
\]
and hence
\[
\Sigma_{jk} = \langle u_j,u_k \rangle 
= \frac{\rho_{jk} - \rho_j\rho_k}{\sqrt{1 - \rho^2_j}\sqrt{1 - \rho^2_k}}
\geq \frac{-\rho_j - \rho_k - 1 - \rho_j\rho_k}{\sqrt{1 - \rho^2_j}\sqrt{1 - \rho^2_k}},
\]
which along with the fact that $\rho_j,\rho_k \in [\rho^* \pm 0.01]$ and $\rho^* \approx -0.689$ implies that $\Sigma_{jk}  \geq - 0.2$, proving \eqref{eq:sigma-lb}.

\paragraph{A surrogate random variable.}
Now define
\begin{equation}
\label{eq:Z-def}
  Z := \sum_{j\in N(i)} w_{ij}\,\mathbf{1}[h_j \geq 3\varepsilon].
\end{equation}
If \(|g_1|<\varepsilon\) and \(h_j \geq 3\varepsilon\), then
\(\mathrm{sgn}(g_1)\langle g,v_j\rangle = \rho_j|g_1|+\sqrt{1-\rho_j^2}\,h_j \geq \varepsilon\)
(for a suitable choice of the absolute constant \(C\) in \(\varepsilon=2^{-Cd}\)).
In particular, this implies \(j\in B_i\), so
\begin{equation}
\label{eq:Z-le-B}
  Z \le \sum_{j\in B_i} w_{ij}.
\end{equation}
Combining \eqref{eq:gain-def} and \eqref{eq:Z-le-B}, we observe that
\begin{equation}
\label{eq:delta-vs-Z}
  \Delta_i \ge (2Z-W_i)_+ = 2\,(Z-W_i/2)_+,
\end{equation}
and hence it will suffice to focus on lower bounding the random variable $(Z - W_i/2)_+$.

\paragraph{Second moment computation.}
Let \(p_0:=\Pr[\mathcal{N}(0,1) \geq 3\varepsilon]=\tfrac12-\Theta(\varepsilon)\).
A standard Gaussian anti-concentration bound implies that for any correlation \(\rho\in[-1,1]\),
\begin{equation}
\label{eq:bivar-shift}
  \Pr[A \geq 3\varepsilon, B \geq 3\varepsilon]
  = \Pr[A \geq 0, B \geq 0] \pm O(\varepsilon)
  = \frac14 + \frac{1}{2\pi}\arcsin(\rho) \pm O(\varepsilon),
\end{equation}
where \((A,B)\) is a centered bivariate normal with unit variances and correlation \(\rho\), and we used
Lemma~\ref{lem:sheppard} in the second equality.
Using \eqref{eq:bivar-shift} and \eqref{eq:sigma-def}, a direct expansion gives
\begin{equation}
\label{eq:varZ}
  \E\bigl[(Z-W_i/2)^2\mid i\in S\bigr]
  \ge \frac{1}{2\pi}\sum_{j,k\in N(i)} w_{ij}w_{ik}\arcsin(\Sigma_{jk}) - O(\varepsilon W_i^2).
\end{equation}

\paragraph{Invoking the geometric lemma.}
The matrix \(\Sigma\) is a Gram matrix of vectors \(u_j\in\mathbb{R}^{d-1}\) with diagonal ones and
off-diagonal entries at least \(-0.2\) by \eqref{eq:sigma-lb}.
Applying Corollary~\ref{cor:weighted} (with a slightly weaker threshold than \(-0.9\), which only
improves the constants) yields
\begin{equation}
\label{eq:arcsin-lb}
  \sum_{j,k\in N(i)} w_{ij}w_{ik}\arcsin(\Sigma_{jk}) \ \ge\ \ W_i^2\cdot 2^{-O(d)}.
\end{equation}
Choosing the constant \(C\) in \(\varepsilon=2^{-Cd}\) large enough so that
\(\varepsilon\le 2^{-\Omega(d)}\) is dominated by the right-hand side of \eqref{eq:arcsin-lb}, we
combine \eqref{eq:varZ} and \eqref{eq:arcsin-lb} to obtain
\begin{equation}
\label{eq:varZ-lb-final}
  \E\bigl[(Z-W_i/2)^2\mid i\in S\bigr] \ \ge\ \ W_i^2\cdot 2^{-O(d)}.
\end{equation}

\paragraph{From variance to expected gain.}

Now we can finally convert the variance bound to a lower bound on the expected gain. Let us define random variables \(\widetilde{Z}:=Z-W_i/2\),  \(\widetilde{Z}_+:=\max(\widetilde{Z},0)\), and $\widetilde{Z}_- := \max(-\widetilde{Z},0)$. Note that by definition \(0\le Z\le W_i\), and hence it follows that $|\widetilde{Z}| \leq W_i/2$. Furthermore, we can also bound:
\begin{align*}
\E[\widetilde{Z}_+] - \E[\widetilde{Z}_-]
= \E[\widetilde{Z}] 
&= \E\left[\sum_{j\in N(i)} w_{ij}\,\mathbf{1}[h_j \geq 3\varepsilon]\right] - \frac{W_i}{2}	\\
&= \sum_{j\in N(i)} w_{ij}\E\left[\,\mathbf{1}[h_j \geq 3\varepsilon]\right] - \frac{W_i}{2}	\\
&= W_i\left(\frac12 - \Theta(\epsilon)\right) - \frac{W_i}{2}	\\
&= -\Theta(\epsilon)W_i,	
\end{align*}
and hence $\E[\widetilde{Z}_-] \leq \E[\widetilde{Z}_+] + O(\epsilon)W_i$. Combining the above bounds, we can now observe:
\[
\E[\widetilde{Z}^2] 
\leq (W_i/2) \E[\widetilde{Z}_+ + \widetilde{Z}_-] 
\leq W_i\widetilde{Z}_+ + O(\epsilon W^2_i) 
\]
Further, using \eqref{eq:varZ-lb-final} and by setting the parameter $C$ large enough in $\epsilon = e^{-Cd}$, we get that
\(\E[\widetilde{Z}_+\mid i\in S] \ge W_i\cdot 2^{-O(d)}\). Finally, \eqref{eq:delta-vs-Z} shows that:
\[
\E[\Delta_i\mid i\in S]=2\E[\widetilde{Z}_+\mid i\in S] \ge W_i\cdot 2^{-O(d)},
\]
which completes the proof.
\end{proof}

\subsection{From local gain to the main theorem}
\label{subsec:mainproof}
We now complete the proof of Theorem~\ref{thm:main}.

\begin{proof}[Proof of Theorem~\ref{thm:main}]
Let \(W:=\sum_{(i,j)\in E} w_{ij}\) be the total edge weight. Let \(\mathrm{SDP}\) denote the value of the
input solution \eqref{eq:sdp-basic}.

\paragraph{Case 1: constant fraction of edges are far from worst correlation.}
Let \(\eta=0.01\). Since the function
\(F(\rho)=(1-\tfrac{2}{\pi}\arcsin(\rho))/(1-\rho)\) has a unique minimizer \(\rho^*\) and is smooth,
there exists a constant \(\gamma>0\) such that
\(F(\rho)\ge \alpha_{\mathrm{GW}}+\gamma\) for all \(|\rho-\rho^*|\ge \eta\).
If the total SDP contribution of edges with \(|\rho_{ij}-\rho^*|\ge \eta\) is at least
\(\delta\cdot \mathrm{SDP}\) for $\delta = e^{-C'd}$ (for $C' > 0$ large enough), then by Claim \ref{cl:edge-wise} and the definition of
\(F\), the plain GW rounding already yields expected value at least
$(\alpha_{\mathrm{GW}}+\gamma\delta)\cdot\mathrm{SDP} = (\alpha_{\mathrm{GW}}+ e^{-O(d)})\cdot\mathrm{SDP}$.

\paragraph{Case 2: the edgewise GW analysis is tight.}
Otherwise, after discarding at most $\delta$-fraction of the SDP value, we may assume that every edge
satisfies \(\rho_{ij}\in[\rho^*\pm \eta]\), which is the hypothesis of Lemma~\ref{lem:local-gain}. 

Now note that $\mathrm{SDP}$ is a constant fraction of \(W\), because
\(\mathrm{SDP}=\tfrac12\sum_{(i,j)}w_{ij}(1 - \rho_{ij})\ge \tfrac12(1 -\rho^*-\eta)W\).
On the other hand, by Lemma~\ref{lem:local-gain} and the fact that
\(\Pr[i\in S]=\Pr[|g_1|<\varepsilon]=\Theta(\varepsilon)\), we have
\begin{align*}
  \E[\Delta]=\sum_{i\in V} \E[\mathbf{1}[i\in S]\,\Delta_i]
  = \Theta(\varepsilon)\sum_{i\in V} \E[\Delta_i\mid i\in S]
  &\ge \Theta(\varepsilon)\sum_{i\in V} W_i\cdot 2^{-O(d)} - O(\delta)\cdot W		\\
  &\geq W\cdot 2^{-O(d)},
\end{align*}
where we used \(\sum_i W_i = 2W\) and the observation that \(\varepsilon=2^{-Cd}\), the factor \(\Theta(\varepsilon)\cdot 2^{-O(d)}\) is still
\(2^{-O(d)}\), as claimed. Note that in the above computation, we remove a $O(\delta)W$-weight of edges from the cumulative local gain to account for the edges for which the correlation $\rho \notin [\rho^* \pm \eta]$. 

Finally, observe that (a) the labeling in Line $3$ of Algorithm \ref{alg:local} cuts $\alpha_{\mathrm{GW}} \cdot\mathrm{SDP}$ weight of edges in expectation (Claim \ref{cl:edge-wise}), and (b) the expected additive improvement in the number of cut edges due to Lines $4$ and $5$ is $\E[\Delta] \geq W \cdot 2^{-O(d)}$. Combining these observations we obtain the bound
\[
\E\Big[\text{weight of edges cut}\Big]\ge (\alpha_{\mathrm{GW}}+2^{-O(d)})\cdot\mathrm{SDP},
\]
which concludes the proof of Theorem \ref{thm:main}.
\end{proof}

\subsection{Proof of Corollary \ref{cor:weighted}}			\label{sec:weighted}

Recall that in this corollary, we would like to lower bound $\E[X^2]$, where $X = \sum^n_{i = 1}w_ig_i$. To that end, let $C$ denote the constant in the exponent from Theorem \ref{thm:geom}. Further, recall that we denote the sum of the weights by $W = \sum^n_{i = 1}w_i$. For the rest of the proof, let us define a parameter $\eps := (2^{-Cd}\cdot W^2)/4$. Now, given weights $w_1,\ldots,w_n$, we can choose rational weights $w'_1,\ldots,w'_n \in \mathbb{Q}$ such that\footnote{This can be done since $\mathbb{Q}$ is dense in $\mathbb{R}$ and the maps $(x_1,\ldots,x_n) \mapsto (\sum^n_{i = 1} x_i)^2$ and $(x_1,\ldots,x_n) \mapsto \E[(\sum^n_{i = 1}x_ig_i)^2]$ are continuous.}:
\begin{equation}			\label{eqn:e1}
	\left|\E\left[\left(\sum^n_{i = 1}w_ig_i\right)^2\right] - \E\left[\left(\sum^n_{i = 1}w'_ig_i\right)^2\right]\right|\leq \eps,
\end{equation}
and
\begin{equation}			\label{eqn:e2}
	\left|\left(\sum^n_{i = 1}w_i\right)^2 - \left(\sum^n_{i = 1}w'_i\right)^2\right| \leq \eps.
\end{equation}
Next, since the new weights $w'_1,\ldots,w'_n$ are rational, we can choose a positive integer $N \in \mathbb{N}$ such that $n_i = w'_i \cdot N$ is an integer for every $i \in [n]$. 

Now, consider the following collection of jointly distributed Gaussian random variables: for every $i \in [n]$, we introduce $n_i$-copies of the Gaussian random variable $g_i$, and call them $g_{i,1},\ldots,g_{i,n_i}$. Note that the resulting collection of Gaussian random variables $G' := (g_{i,j} : i \in [n], j \in [n_i])$ satisfy the conditions required to apply Theorem \ref{thm:geom}:
\begin{itemize}
	\item The rank of the covariance matrix of $G'$ is the same as that of the covariance matrix of $(g_i)_{i \in [n]}$, and hence by the setting of the corollary, the rank is at most $d$.
	\item Any pair of Gaussians $g,g' \in G'$ also satisfy $\E[gg'] \geq -0.9$.
\end{itemize}

Equipped with the above setup, we can now proceed as follows. Firstly, using \eqref{eqn:e1} and using the definition $\{n_i\}_i$, we can lower bound:
\begin{equation}		\label{eqn:e3}
	\E\left[\left(\sum^n_{i = 1}w_ig_i\right)^2\right]
	\geq \E\left[\left(\sum^n_{i = 1}w'_ig_i\right)^2\right] - \eps 
	\geq \frac{1}{N^2}\E\left[\left(\sum^n_{i = 1}n_ig_i\right)^2\right] - \eps \\
\end{equation}
Next, observing that the collection of Gaussians $(g_{i,j})_{i,j}$ satisfy the premise of Theorem \ref{thm:geom}, we can apply Theorem \ref{thm:geom} to lower bound:
\begin{equation}		\label{eqn:e4}
	\frac{1}{N^2}\E\left[\left(\sum^n_{i = 1}\sum^{n_i}_{j = 1}g_{i,j}\right)^2\right]
	\geq \frac{2^{-Cd}}{N^2}\left(\sum^n_{i = 1}n_i\right)^2 
	= 2^{-Cd}\left(\sum^n_{i = 1}w'_i\right)^2,
\end{equation}
where in the last step, we again use the definition $n_i = w'_i N$. In the final step, we can use \eqref{eqn:e2} to further bound the above expression in terms of $\{w_i\}_i$:
\begin{equation}		\label{eqn:e5}
	2^{-Cd}\left(\sum^n_{i = 1}w'_i\right)^2 
	\geq 2^{-Cd} \left(\sum^n_{i = 1}w_i\right)^2 - \epsilon
	= 2^{-Cd}\cdot W^2 - \eps.
\end{equation}
Putting together the bounds from \eqref{eqn:e3}, \eqref{eqn:e4}, and \eqref{eqn:e5}, we get that:
\begin{align*}
	\E\left[\left(\sum^n_{i = 1}w_ig_i\right)^2\right] 
	&\geq \frac{1}{N^2}\E\left[\left(\sum^n_{i = 1}n_ig_i\right)^2\right] - \eps  \\
	&\geq 2^{-Cd}\left(\sum^n_{i = 1}w'_i\right)^2 - \eps \\
	&\geq 2^{-Cd} \cdot W^2 - 2\eps \\
	&\geq \frac{2^{-Cd}\cdot W^2}{2},
\end{align*}
where the last step follows from our choice of $\epsilon$.